\crefname{figure}{Fig.}{Figs.}
\definecolor{darkpastelgreen}{rgb}{0.01, 0.75, 0.24}
\theoremstyle{plain}
\newtheorem{theorem}{Theorem}[section]
\newtheorem{corollary}[theorem]{Corollary}
\newtheorem{lemma}[theorem]{Lemma}
\newtheorem{proposition}[theorem]{Proposition}
\theoremstyle{definition}
\newtheorem{definition}[theorem]{Definition}
\newtheorem{remark}[theorem]{Remark}
\newtheorem{example}[theorem]{Example}
\newtheorem{problem}{Problem}
\newcommand{\R}[0]{{\mathbb{R} }}
\newcommand{\X}[0]{{\mathcal{X} }}
\DeclareMathOperator{\supp}{{\rm supp }}
\newcommand{\Jen}[2][]{\ensuremath{\ifthenelse{\equal{#1}{}}{\mathfrak{J}(#2)}{\mathfrak{J}_{#1}(#2)}}\xspace}
\newcommand{\obj}[2][]{\ensuremath{\ifthenelse{\equal{#1}{}}{{\rm Obj }(#2)}{{\rm Obj}_{#1}(#2)}}\xspace}
\DeclareMathOperator{\E}{\mathbb{E}}
\newcommand{\defn}[1]{{\textbf{\textit{#1}}}}
\title{Optimal Scoring Rule Design under Partial Knowledge}
\author{Yiling Chen\thanks{Harvard University \texttt{yiling@seas.harvard.edu}}\and Fang-Yi Yu\thanks{George Mason University \texttt{fangyiyu@gmu.edu}}}
\date{}
\begin{document}

\maketitle
\begin{abstract}
This paper studies the design of optimal proper scoring rules when a principal has partial knowledge of an agent's signal distribution. Recent work~\cite{hartline2020optimization} characterizes the proper scoring rules that maximize the increase of an agent's payoff when the agent chooses to access a costly signal to refine a posterior belief from his prior prediction, under the assumption that the agent's signal distribution is fully known to the principal.    
In our setting, the principal only knows about a set of distributions where the agent's signal distribution belongs. We formulate the scoring rule design problem as a max-min optimization that maximizes the worst-case increase in payoff across the set of distributions.

We propose an efficient algorithm to compute an optimal scoring rule when the set of distributions is finite, and devise a fully polynomial-time approximation scheme that accommodates various infinite sets of distributions.   We further remark that widely used scoring rules, such as the quadratic and log rules, as well as previously identified optimal scoring rules under full knowledge~\cite{hartline2020optimization}, can be far from optimal in our partial knowledge settings.

\end{abstract}


\section{Introduction}

Proper scoring rules are scoring functions that incentivize truthful information elicitation: an agent with a subjective belief about an uncertain event maximizes his expected score by making a prediction according to his belief.  If the agent can acquire a costly signal to refine his belief, which proper scoring rules maximally incentivize the agent's information acquisition? Recent work~\cite{hartline2020optimization} explores this question under the assumption that the principal has full knowledge about the agent's information structure. In contrast, we investigate this optimal scoring rule design question when the principal has only partial knowledge about the agent's information structure, only knowing the set of information structures that the agent's belongs to.  

Incentivizing information acquisition is crucial in many real-world applications. For example, in conference reviewing, a reviewer could spend little to no effort and form his {prior assessment} about a paper primarily based on the length of the paper, the amount of grammatical errors in the introduction, or the references cited. But a conference chair would aspire for a high-effort review where the reviewer carefully reads the paper and then forms his, more informed, {posterior assessment}. As another example, in crowdsourced prediction for the reproducibility of scientific studies~\cite{Gordon2020,liu2020replication}, an expert could form a prediction on the replication outcome of a study based on his general knowledge about the study's topic, the reputation of the publication venue, and the authors' affiliations. But a more valuable and accurate prediction requires the expert to examine the study's methodology and evaluation procedures carefully. In these applications, the principal ideally hopes to devise a proper scoring rule that maximizes the expected increase in score if the agent acquires the information, to maximally incentivize information acquisition. The principal however only has limited knowledge about the agent's information structure.   



More formally, an agent's information structure consists of two parts: the agent's prior belief about the random variable of interest (the prior) and the distribution of the agent's signal conditioned on every realization of the random variable (an experiment). We formulate the optimal scoring rule design problem as a max-min optimization that maximizes the agent's worst-case increase in score across the set of possible information structures. We explore the problem for four settings of principal's knowledge, ranging from the special full-knowledge case to varying degree of partial knowledge:  

%

\begin{enumerate}
    \item The principal knows both the prior and the experiment and hence the set of information structures is a singleton.  We reprove \cite{hartline2020optimization}'s results and show that the optimal scoring rules are $v$-shaped in \cref{thm:singleton}. In addition, \cref{thm:singleton} also provides a close-form expression for an optimal scoring rule. 
    \item The principal knows the prior but is uncertain about the experiment. The set of information structures shares the same prior, as introduced in \cref{ex:uniprior}. Interestingly, we show in \cref{thm:uniprior} that the same $v$-shaped scoring rule in \cref{thm:singleton} is also optimal.
    \item The principal knows the experiment but is uncertain about the agent's prior.  The set of information structures shares the same experiment, as introduced in \cref{ex:uniexperi}. We present a fully polynomial time approximation scheme (FPTAS) that outputs approximately optimal piece-wise linear scoring rules in \cref{thm:uniexperi}.
    \item The principal is uncertain about both the prior and the experiment. In \cref{thm:finite}, we propose an efficient algorithm to compute an optimal scoring rule when the set of distributions is finite. Additionally, for $\rho$-correlated information structures defined in \cref{ex:beta}, we develop an FPTAS to find approximately optimal scoring rules in \cref{thm:uniexperi}. $\rho$-correlated information structures have interesting connections to Beta-Bernoulli model~\cite{murphy2012machine} and noise operator in Boolean function analysis~\cite{o2014analysis}.
\end{enumerate}

We then run simulations to evaluate the performance of two frequently used proper scoring rules (quadratic scoring rule and log scoring rule), the $v$-shaped scoring rule (which is optimal in the first two settings) and the piece-wise linear scoring rules obtained by our FPTAS for the $\rho$-correlated information structures.  The simulations show that our algorithm's piecewise linear scoring rules perform well, and provide the most uniform incentive when the signal and the state have a large correlation.  However, when the correlation is small, log scoring rule outperforms our piecewise linear scoring rules and other scoring rules.  In particular, as the correlation decreases, we observe that the piecewise scoring rule approaches the log scoring rule which may suggest an interesting connection between $\rho$-correlated information structure and the log scoring rule.  The $v$-shaped scoring rule empirically performs worst in our partial knowledge setting.  
\paragraph{Organization and contribution of technical results}
Using Savage's representation of proper scoring rules~\cite{mccarthy1956measures,savage1971elicitation,gneiting2007strictly},  we convert variable space of our max-min optimization problem from scoring rules to convex functions where the increase in payoff becomes the Jensen's gap of the associated convex function as \cref{prob:opt}.  

In \cref{sec:prior}, we derive a geometric interpretation of the optimization problem.  \Cref{lem:gain} shows that the information gain amounts to how \emph{curved} the associated convex function is at the prior of the information structure.  This interpretation is critical as it leads to \cref{thm:uniprior} that shows $v$-shaped scoring rule is optimal for known prior setting (\cref{ex:uniprior}).  

In \cref{sec:finite,sec:infinite}, we delve into the scenario of unknown prior settings. We first use linear program to find an optimal piecewise linear scoring rule when the collection of information structures is finite in \cref{sec:finite}.  Then we venture into the domain of infinite information, and provide an FPTAS (\cref{thm:uniexperi}) for various infinite sets  (\cref{ex:uniexperi,ex:beta}).  Informally, our FPTAS runs the linear program on a finite subset of the infinite set of information structures, and provides approximation guarantees as long as the finite subset is an $\epsilon$-covering of the original set under earth mover's distance.   To this end, we relate the earth mover's distance to our optimization problem in \cref{lem:lipschitz}.  Additionally, we design a novel coupling that can bound the earth mover's distance of two posterior predictions by the total variation distance of their joint distributions on signal and state in \cref{prop:tv2emd}.  This coupling argument may be of independent interest for designing approximation algorithms for information aggregation and elicitation.  

\paragraph{Related work}
Our problem can be seen as purchasing prediction from strategic people. The work on this topic can be roughly divided into two categories according to whether agents can misreport their signal or prediction.  Below we focus on the relationship of our work to the most relevant technical scholarship.

In the first category, to ensure agents reporting their signals or predictions truthfully, there are two settings according to whether money is used for incentive alignment.  In the first setting, the analyst uses monetary payments to incentivize agents to reveal their data truthfully.  The challenge is to ensure truth-telling gets the highest payments.  Existing works verify agents' reports by using either an observable ground truth (proper scoring rules) or peers' reports (peer prediction). 
For the second setting, individuals’ utilities directly depend on an inference or learning outcome (e.g. they want a regression line to be as close to their own data point as possible) and hence they have incentives to manipulate their reported data to influence the outcome.~\cite{dekel2010incentive,meir2012algorithms,perote2004strategy,hardt2015strategic,chen2018strategyproof}

Our setting generalizes \cite{hartline2020optimization}'s.  Our max-min optimization formulation captures the principal's partial knowledge about agents' information structure, while theirs focuses more on known information structure.  We consider both ex-ante and ex-post setting which allow us to compare our result to the log scoring rule which is arguably one of the most important proper scoring rules but cannot be ex-post bounded.  
They show the optimal scoring rule is $v$-shaped when the information structure is known.  We generalize the result to known prior setting in \cref{thm:uniprior}.

\cite{neyman2020binary} also study optimal scoring rule design problem, but is more related to sequential method~\cite{ghosh2011sequential}.  Instead of imposing bounded payment conditions, their objective comprises both payment and accuracy.  They consider a special case of Beta-Bernoulli information structures where the prior is uninformative and design scoring rule for an agent to sequentially acquire samples from a Bernoulli distribution.

Finally, \cite{https://doi.org/10.48550/arxiv.2204.01773} study an information acquisition problem: When the information structure and the cost of information are known, the principal chooses a proper scoring rule (menu of contracts) to incentivize information acquisition as cheaply as possible subject to limited liability.  Informally, their formulation can be seen as the dual of our problem. Instead of maximizing information gain subjected to bounded payment conditions, they want to minimize payment with a lower bound on the information gain.  Their optimal scoring rules are also $v$-shaped and similar to our known information structure setting.

In the second category, agents cannot misreport their signal or prediction.  The problem of purchasing data from people has been investigated with different focuses, e.g. privacy concerns~\cite{ghosh2011selling,fleischer2012approximately,ghosh2014buying,nissim2014redrawing,cummings2015truthful,waggoner2015market}, effort and cost of data providers~\cite{aaron2012conducting,cai2014optimum,abernethy2015actively,chen2017optimal,zheng2017active,chen2018prior}, and reward
allocation~\cite{ghorbani2019data,agarwal2019market}.

More generally, our problem is also related to contract theory~\cite{grossman1992analysis}.  Recent work also studies optimal contracts where the principal does not fully know the agent's cost.~ \cite{https://doi.org/10.48550/arxiv.2010.06742,https://doi.org/10.48550/arxiv.2111.09179} Moreover, \cite{bechtel2020delegated} study delegation problem that tries to optimize the efforts of others.  However, the critical difference that sets ours apart from these works is the principal's preference.  Most of the work in contract theory aims to maximize the principal's utility, but ours treats the principal's preference as a budget constraint and optimize the efforts of the agents.  Our formulation may be more suitable for complicated problems, e.g., peer grading or conference review, that consists of multiple sub-problems, and the principal's utility can not be easily decomposed according to sub-problems.  For the modelling choice, \cite{fu2014optimal} studies auction design where the unknown value distribution is from a known collection, without assuming a prior over the collection.
\section{Model and Preliminaries}\label{sec:pre}
In \cref{sec:pre1}, we first introduce proper scoring rules and two boundedness notions.  Then we define information structures that formalize the relationship between the costly signal and the event of interest, and we specify the principal's partial knowledge and introduce three motivating examples.  Finally, we define the value of information and our scoring rule design problem. 
We further simplify our problem by connecting scoring to convex function in \cref{sec:ps}.
\subsection{Optimal Scoring Rule for Costly Information}\label{sec:pre1}
This paper studies the design of scoring rules for binary state\footnote{We discuss the $d$-dimensional setting in the appendix.} which maps an agent's reported prediction $x$ and the realized ground state $\omega\in \{0,1\}$ to a score for the agents, $PS(\omega, x)$.  As a principal designs a scoring rule for the agent, one desirable scoring rule should elicit {truthful} predictions.  
\begin{definition}
    A scoring rule is \emph{proper} if for all predictions $x, x'\in [0,1]$,
$$\E_{\omega\sim x}[PS(\omega, x)]\ge \E_{\omega\sim x}[PS(\omega, x')].$$

In other words, if an agent's prediction for $\omega = 1$ is $x$, he cannot gain a higher score by misreporting $x'$.  Here $\omega\sim x$ denote random variable $\omega = 1$ with probability $x$, and $0$ otherwise.  
\end{definition}

Another desirable property of a scoring rule is boundedness as the principal has a finite budget for the reward.  Here we consider the following two notions.
\begin{definition}\label{def:bounded}
    A scoring rule $PS$ is \emph{ex-post bounded} by $B>0$ if for all $\omega\in \{0,1\}$ and $x\in [0,1]$, $PS(\omega, x)\in [0,B]$.  Alternatively, $PS$ is \emph{ex-ante bounded} by $B$ if for all $x\in [0,1]$ $\E_{\omega\sim x}[PS(\omega, x)]\in [0,B]$.
\end{definition}
In addition to properness and boundedness, the principal often wants to design proper scoring rules that incentivize effort.  Specifically, when the agent can refine his prediction by exerting a binary effort for a costly signal, how can the principal design a scoring rule that maximizes the agent's perceived gain from exerting effort without complete knowledge of the prior and posterior distributions?

\paragraph{Prior, posterior, and information structures}  The agent can access a costly signal $S\in \mathcal{S}$ with a finite support that improves his prediction of the unknown binary state of the world $W\in \{0,1\}$.  Specifically, the agent has an \defn{information structure} $P$ which is a joint distribution on the state and signal.  An information structure $P$ consists of a \emph{prior} $\pi\in [0,1]$ of the state and an \emph{experiment} $\sigma: \{0,1\}\to \Delta(\mathcal{S})$ which is a conditional distribution of the signal given the state, so that for all state $\omega\in \Omega$ and signal $s\in \mathcal{S}$, 
$$\pi = P(W = 1)\textrm{ and }\sigma(s|\omega) = P(S = s\mid W = \omega).$$   We will refer $P$ by the pair of prior and experiment $(\pi, \sigma)$.  

Given $P$ with $(\pi, \sigma)$, if the agent ignores the signal, his truthful prediction of the state is the \emph{prior}, $P(W = 1) = \pi$.  If the agent accesses the signal and sees $S = s$, his truthful prediction becomes the \emph{posterior}
$P(W = 1\mid S = s)
     = \frac{\pi\sigma(s|1)}{\pi\sigma(s|1)+(1-\pi)\sigma(s|0)}.$
The posterior prediction $X(s) := P(W = 1\mid S = s)$ is a random variable over the randomness of signal.  We will omit $s$ and write the random variable of posterior prediction as $X$.  For instance, the expectation of posterior equals prior $\E X = \pi$ by Bayes' rule.

\paragraph{Partial knowledge of information structures}
In our partial knowledge setting, the principal only knows a collection of information structures $\mathcal{P}$ that the agent’s information structure falls into.  Below are three examples of collections of information structures that model the principal's partial knowledge.  We will use these as running examples throughout the paper.

First the principal may know the signal distribution given ground state (experiment) but does not know the agent's background (prior). Intuitively, this captures online crowdsourcing setting, e.g., image annotation,  where the background of the agent is unknown, but quality of signal can be controlled.  Note that the priors are bounded away from zero and one by $\delta>0$, because if the prior is zero or one, the posterior is identical to the prior.   
\begin{example}\label{ex:uniexperi}
Given an experiment $\sigma$ and $0< \delta\le 1/2$, a collection of information structures with homogeneous experiment is
$\mathcal{P}_\sigma = \{(\pi, \sigma): \pi\in [\delta,1-\delta] \}.$
\end{example}

To another extreme, the principal may know agent's prior but does not know the agent's experiment.  This can model peer grading in classroom where the student's prior is known but the quality of his work is uncertain.  
\begin{example}\label{ex:uniprior}
Given a prior $\pi\in (0,1)$ and a set of experiments $\sigma_i$ where $i\in \mathcal{I}$, a collection of information structures with homogeneous prior is 
$\{(\pi, \sigma_i): i\in \mathcal{I}\}$.   
\end{example}

The first example has the same experiment for all information structures, and the second has the same prior.  We provide an example where the prior and experiment both vary.  
\begin{example}\label{ex:beta}
Given $\rho\in [0,1]$, $0< \delta\le 1/2$ and a binary signal space $\mathcal{S} = \{0,1\}$, a $\rho$-correlated experiment produces signal that equals the ground state with probability $\rho$ and samples from prior independently otherwise.  
A collection of $\rho$-correlated information structures $\mathcal{P}_\rho$ is
$$\{(\pi, \sigma): \pi\in [\delta, 1-\delta],  \sigma(1|1) = \rho+(1-\rho)\pi, \sigma(1|0) = (1-\rho)\pi\}.$$
Note that the value $\rho$ controls the correlation between the signal and the ground state.  In particular, if $\rho = 0$, the signal is independent of the ground state, and if $\rho = 1$, the signal perfectly agrees with the ground state.  Similar to \cref{ex:uniexperi}, the priors are bounded away from zero and one.
\end{example} 
There are several interesting interpretations of $\rho$-correlated experiments.  First, the information structure is the posterior predictive distribution of Beta-Bernoulli model~\cite{murphy2012machine} given an additional sample, and $\rho$ captures the strength of prior.  Second, the distribution of the signal and the state is also known as $\rho$-correlated pair in Boolean function analysis~\cite{o2014analysis}.  We formalize this connection in the appendix.

\paragraph{Information gain and max-min optimization problem}
Now we formalize the agent’s perceived gain from exerting effort under a proper scoring rule $PS$. 
Let $PS(x):=\E_{\omega\sim x}PS(\omega, x)$ be the expected score of truthful reporting a prediction $x$.  Given $PS$ and $P$ with $(\pi, \sigma)$, the expected score of the agent truthfully reporting his initial prediction is $PS(\pi)$.  Alternatively, if the agent accesses the costly signal $S = s$, he gets score $PS(X(s))$.  Thus, his expected score of accessing the costly signal before seeing it is $\E PS(X)$ over the randomness of $X$.  Since $\E X = \pi$, the agent's gain from exerting effort is 
$$\E PS(X)-PS(\pi) = \E PS(X)-PS(\E X).$$
We call the above \defn{information gain} on $P$ under the proper scoring rule $PS$. 
Because the information gain is fully determined by the posterior distribution random variable, we can use $P$ and $X$ interchangeably.  Additionally, we let $\mathcal{X}$ be the collection of random variables of prediction induced from information structures $P$ in $\mathcal{P}$. 

To maximize the agent's gain from exerting effort for any possible information structure in $\mathcal{X}$, the principal finds a bounded proper scoring rule $PS: \{0,1\}\times [0,1]\to \R$ which maximizes the worst-case information gain 
    $$\max_{PS}\min_{X\in \mathcal{X}}\E PS(X)-PS(\E X).$$
We will simplify our optimization in \cref{sec:ps} as \cref{prob:opt}.

\subsection{Savage's Representation of Proper Scoring Rules}\label{sec:ps}
We characterize the space of bounded proper scoring rules. First, Savage's representation of proper scoring rules connects proper scoring rules and convex functions. 

\begin{theorem}[\cite{mccarthy1956measures,savage1971elicitation}]\label{thm:ps}
When the state space $\Omega = \{0,1\}$ is binary, for every proper scoring rule $PS$, there exists a convex function $H:[0,1]\to\R$ so that for all $x\in [0,1]$ and $\omega\in \{0,1\}$
$$
PS(\omega, x) = \begin{cases}
H(x)+ \partial H(x)\cdot (1-x)&\text{ when } \omega = 1,\\
H(x)-\partial H(x)\cdot x&\text{ when } \omega = 0\end{cases}$$
where $\partial H(x)$ is a subgradient of $H$ at $x$.

Conversely, for every convex function $H: [0,1]\to \R$, there exists a proper scoring rule such that the above condition hold.
\end{theorem}

With above characterization, we can specify proper scoring rules by their associated convex functions. For instance, a \emph{quadratic scoring rule} has a convex function $2(x^2+(1-x)^2)-1$, and a \emph{log scoring rule} has a convex function $x\log_2 x+(1-x)\log_2 (1-x)+1$.
We will study \defn{piecewise-linear scoring rules} whose associated convex function is  piecewise linear $\max\{f_1(x), \dots, f_m(x)\}$ with affine functions $f_i$, $i = 1,\dots, m$.  Finally, \defn{$v$-shaped scoring rules} in~\cite{hartline2020optimization} are special cases of piecewise linear scoring rules,  whose corresponding convex function is $v$-shaped with parameter $(a,b,c,x_0)$ so that $H_{(a,b,c,x_0)}(x) = \max\{a(x-x_0)+c, b(x-x_0)+c\}$ with the \emph{vertex} at $x_0$.  \cref{fig:func2} shows examples of quadratic and log scoring rules, \cref{fig:v_shape} is for $v$-shaped scoring rules, and \cref{fig:linear} is for piecewise scoring rules.


Now we reformulate our optimization problem in terms of convex functions with the following lemmas by simple applications of \cref{thm:ps}.  The first lemma converts \emph{the information gain as the gap of Jensen's inequality of the associated convex function}.  The second lemma shows the bounded conditions can be checked by the value and sub-gradient of the convex function.  The proofs are in the appendix.
\begin{lemma}\label{lem:gain}
    For any proper scoring rule $PS$ with convex function $H$, the expected score of truthfully reporting $x$ is 
    $PS(x) = H(x)$, and the information gain of information structure $X$ under the proper scoring rule $PS$ is
    $\E PS(X)-PS(\E X) = \E[H(X)]-H(\E X).$
\end{lemma}
We will define the information gain of information structure $X$ under the proper scoring rule with $H$ as
\begin{equation}\label{eq:obj}
    \Jen[X]{H} := \E[H(X)]-H(\E X).
\end{equation}
\begin{lemma}\label{lem:bounded}
    Given $B>0$, for any proper scoring rule $PS$ with convex function $H$, $PS$ is ex-post bounded by $B$ if and only if
    $H(x)+\partial H(x)(1-x)$ and $H(x)-\partial H(x)x$ are in $[0,B]$ for all $x\in [0,1]$.
        
    $PS$ is ex-ante bounded by $B$ if and only if 
$H(x)\in [0,B]$ for all $x\in [0,1]$.
\end{lemma}
Let $\mathcal{B}_{\textrm{expost}}$ denote the set of ex-post bounded convex functions and $\mathcal{B}_{\textrm{exante}}$ for ex-ante bounded convex function as \cref{lem:bounded}.  By \cref{lem:bounded}, $\mathcal{B}_{\textrm{expost}}\subseteq \mathcal{B}_{\textrm{exante}}$.  
We now derive the simplified program for our max min optimization problem over the space of convex functions. 
\begin{problem}\label{prob:opt}
    Given a set of information structures $\mathcal{X}$ and a set of convex functions $\mathcal{B}$, find a convex function $H: [0,1]\to \R$ which maximizes the worst-case information gain 
    \begin{equation*}
        \max_{H\in \mathcal{B}} \min_{X\in \mathcal{X}}  \Jen[X]{H}.
    \end{equation*}
We will focus on $\mathcal{B}$ being $\mathcal{B}_{\text{expost}}$ in ex-post bounded setting and $\mathcal{B}_{\text{exante}}$ in ex-ante setting.
\end{problem}

\section{Main Results}
In \cref{sec:prior}, we explore the setting of known information structure and show $v$-shaped scoring rules are optimal in \cref{thm:singleton}.  We further show that the same $v$-shaped scoring rule is also optimal in the setting of known prior but uncertain about the experiment in \cref{thm:uniprior}.  
We shift our focus to the unknown prior setting.  We consider finite collections of information structures in \cref{sec:finite}, and devise an efficient algorithm that solves for the optimal scoring rules in \cref{thm:finite}.  Finally, we design an FPTAS in \cref{thm:uniexperi} for infinite collections of information structures, \cref{ex:uniexperi,ex:beta}.  

We provide outlines of proofs and intuitions, while complete proofs are deferred to the appendix.

\subsection{Singleton and Homogeneous Prior Information Structures}\label{sec:prior}

As a warm-up, let's consider the principal exactly knows the agent's information structure so that $\mathcal{X} = \{X\}$ is a singleton.  We show that the optimal $H$ can be $v$-shaped (defined in \cref{sec:ps}) with the vertex at the prior in both ex-ante and ex-post bounded settings.


\begin{figure}
    \centering
\includegraphics[width=0.7\columnwidth]{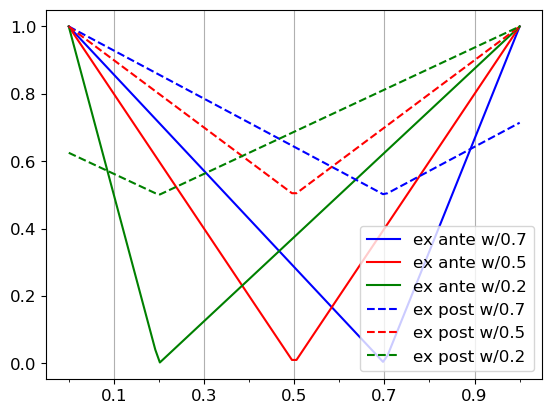} 
    \caption{The optimal scoring rules for the homogeneous prior setting. With \cref{thm:uniprior}, the dashed lines are the optimal $v$-shaped convex functions for three different priors ($0.7, 0.5$, and $0.2$) in the ex-post setting, and the solid ones are in the ex-ante setting.}
    \label{fig:v_shape}
\end{figure}
\begin{proposition}[singleton]\label{thm:singleton}
If $\mathcal{X} = \{X\}$ is singleton in ex-post or ex-ante bounded settings, there exists an optimal scoring rule associated with a $v$-shaped convex function for \cref{prob:opt}.  
\begin{enumerate}
    \item A $v$-shaped convex function with $x_0 = \pi$, $a = \frac{-B}{\pi}, b = \frac{B}{1-\pi}$ and $c = 0$ is optimal in ex-ante bounded setting with $B$.
    \item A $v$-shaped convex function with $x_0 = \pi$, $a = \frac{-B}{2\max \{\pi, 1-\pi\}}$, $b = \frac{B}{2\max \{\pi, 1-\pi\}}$, and $c = \frac{B}{2}$ is optimal in ex-post bounded setting with $B$.
\end{enumerate}
\end{proposition}
While the above result is already proved in \cite{hartline2020optimization}, we provide an explicit closed-form expression of the optimal solution and present an alternative proof in the appendix.  We generalize the proof and show that $v$-shaped scoring rule is optimal for any collection of information structure when they share the same prior (defined in \cref{ex:uniprior}).

\begin{theorem}[homogeneous prior]\label{thm:uniprior}
Given any collection of information structures with homogeneous prior $\pi$ (\cref{ex:uniprior}), there exists an optimal scoring rule associated with a $v$-shaped convex function with the vertex at $\pi$ in both ex-ante and ex-post bounded settings respectively.
\end{theorem}
The proof follows from the fact that the optimal $v$-shaped scoring rules only depend on the prior.  When multiple information structures have the same prior, the same $v$-shaped scoring rule is still optimal.  
\begin{figure}
    \centering
\includegraphics[width=0.7\columnwidth]{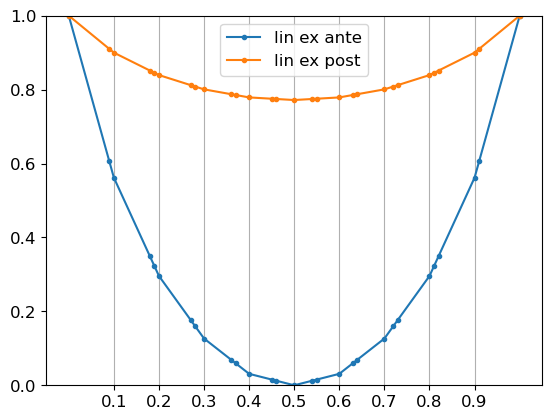}
    \caption{The optimal piecewise linear functions for a finite collection of information structures $\mathcal{P}_{\rho, N} = \{(\pi, \sigma)\in \mathcal{P}_\rho : N\pi\in \mathbb{N}\}$ with $\rho = 0.1$ and $N = 10$ (\cref{sec:sim}). In both ex-post and ex-ante setting, the vertex of the function is at the prior of the collection of information structures which is aligned with our intuition in \cref{sec:infinite} which suggests maximizing curvature at the prior.}
    \label{fig:linear}
\end{figure}

These results suggest that the principal should choose $H$ that is ``curved'' at the prior in order to incentivize the agent to derive the signal and move away from the prior as \cref{fig:v_shape}.  


\subsection{Finite Collections of Information Structures}\label{sec:finite}
If the collection of information structures is \emph{finite} so that $|\X|<\infty$, we give an efficient algorithm to compute an optimal piecewise linear scoring rule.   
Finite collections of information structures are natural when there are finite types of agents, and is useful to approximate some infinite collections of information structures as shown in the next section.



\begin{theorem}\label{thm:finite}
If $\X$ is finite, for both bounded settings, there exists an optimal proper scoring rule that is piecewise linear and can be derived by solving a linear program in time polynomial in $|\X|\cdot|\mathcal{S}|$.
\end{theorem}

The main idea is that when $\X$ is finite $\Jen[X]{H}$ in \cref{eq:obj} only depends on the evaluations of $H$ on the support of $\X$ $\overline{\supp}(\X) := \cup_{X\in \X} (\supp(X)\cup \{\E X\})$.  Thus, instead of searching all possible bounded convex functions, we can reduce the dimension of \cref{prob:opt} and use a linear program whose variables contain the evaluations of $H$ in $\overline{\supp}(\X)$ and linear constraints ensure that the evaluations can be extended to a convex piecewise linear function.  \Cref{fig:linear} presents an example of piece-wise linear convex function outputted by our linear program.  This observation allows us to solve the problem in weakly polynomial time in $|\overline{\supp}(\X)| \le |\X|\cdot|\mathcal{S}|$.  We present the formal proof in the appendix.

\subsection{Infinite Collections of Information Structures}\label{sec:infinite}

For a non-finite collection of information structures, solving \cref{prob:opt} exactly is generally infeasible, because even evaluating the objective value $\min_{X \in \mathcal{X}} \Jen[X]{H}$ may depend on evaluating all $X \in \mathcal{X}$, which could require unbounded time.  In particular, this limitation also applies to \cref{ex:uniexperi,ex:beta}.  Below, we extend our algorithms for finite collections of information structures to more general collections. We demonstrate the idea by approximately solving the optimal scoring rule for \cref{ex:uniexperi,ex:beta} in \cref{thm:uniexperi}.  \Cref{rem:general} discusses the potential and limitation of our method for abstract collections of information structures. 

\begin{theorem}\label{thm:uniexperi}  Given any $\epsilon>0, \delta>0$ and $B>0$, in ex-post bounded setting $\mathcal{B} = \mathcal{B}_{expost}$ (or ex-ante setting $\mathcal{B} = \mathcal{B}_{exante}$), there exists an efficient algorithm that outputs an $\epsilon$-optimal scoring rule $H$ on information structures with a homogeneous experiment (in \cref{ex:uniexperi}) so that, 
$$\min_{X\in \X}  \Jen[X]{H}\ge \max_{H'\in \mathcal{B}}\min_{X\in \X}\Jen[X]{H'}-\epsilon,$$
with running time polynomial in $B$ and $1/\epsilon$ (or $B$, $1/\epsilon$ and $1/\delta$ respectively). 
The same results hold for $\rho$-correlated information structures in \cref{ex:beta}.
\end{theorem}

Our FPTAS computes a finite collection of information structures, and runs our linear program in \cref{thm:finite}.  
Recall that $\mathcal{P}_\sigma$ is the collection of information structures with homogeneous experiment (in \cref{ex:uniexperi}), our algorithm picks a finite set information structures 
\begin{align*}
    \mathcal{P}_{\sigma, N} = \{(\pi, \sigma)\in \mathcal{P}_\sigma :N\pi\in \mathbb{N}\}
\end{align*} with $N = \lfloor B/(2\epsilon)\rfloor$ and outputs an optimal piecewise linear scoring rule $H_{\sigma, N}$ for $\mathcal{P}_{\sigma, N}$ using \cref{thm:finite}.  Similarly, for $\rho$-correlated information structures, let
$\mathcal{P}_{\rho, N} = \{(\pi, \sigma)\in \mathcal{P}_\rho : N\pi\in \mathbb{N}\}$.  Our algorithm outputs an optimal piecewise linear scoring rule $H_{\rho, N}$ for $\mathcal{P}_\rho$.  

The main challenge is to show the approximation guarantees.  We observe that given a pair of information structures, the difference between information gains should be small if their posterior distribution is close. Formally,  given two posteriors $X$ and $X'$ induced by two information structures $P$ and $P'$ respectively, the \defn{earth mover's distance} (EMD)~\cite{Chatterjee2008} between posteriors is
$d_{EM}(X,X') = \sup_{f:\|f\|_{Lip}\le 1}\E[f(X)]-\E[f(X')]$
where $\|f\|_{Lip}$ is the Lipschitz constant of $f$.
The following lemma relates the earth mover's distance to our optimization problem.  
\begin{lemma}\label{lem:lipschitz}
    Consider two posteriors $X$ and $X'$, and $B>0$. For any $H\in \mathcal{B}_{expost}$,
    $|\Jen[X]{H}-\Jen[X']{H}|\le 2B\cdot d_{EM}(X,X').$
    Similarly, if $\supp(\X)$ and $\supp(\X')$ are contained in $[1/L, 1-1/L]$ for some $L>0$, for any $H\in \mathcal{B}_{exante}$, 
    $|\Jen[X]{H}-\Jen[X']{H}|\le 2LB\cdot d_{EM}(X,X').$
\end{lemma}

To use the above results, we need to show $\mathcal{P}_{\sigma, N}$ is a good \emph{covering} for $\mathcal{P}_\sigma$ so that for all information structures in $\mathcal{P}_\sigma$ there exists one in $\mathcal{P}_{\sigma, N}$ that has a small earth mover's distance on the posteriors.  Here, we design a novel coupling that can bound the earth mover’s distance of two posterior predictions by the total variation distance on signal and state space.  
Given two information structures $P$ and $P'$ on $\Omega\times \mathcal{S}$, the \defn{total variation distance} (TVD)~\cite{Chatterjee2008} between these information structures is $d_{TV}(P,P') = \frac{1}{2}\sum_{w,s}|P(w,s)-P'(w,s)|.$


\begin{lemma}\label{prop:tv2emd}
    Given two information structures $P$ and $P'$ with induced posterior predictions $X$ and $X'$, 
    $d_{EM}(X, X')\le d_{TV}(P, P').$
\end{lemma}
\Cref{prop:tv2emd} upper bound the EMD of posterior predictions by the TVD of information structures.  The proof use the maximal coupling of $P$ and $P'$to couple $X$ and $X'$ using the maximal coupling of $P$ and $P'$.  Then we show the expected difference of $X$ and $X'$ in our coupling can be converted to the total variation distance.  Note that the EMD between two random variables is always smaller than the TVD between them, but \cref{prop:tv2emd} uses the TVD of the information structures, which can be much smaller than the TVD of posterior.\footnote{For instance, the TVD of posterior distribution between two $\rho$-correlated information structures with prior $\pi$ and $\pi'$ respectively is $1$, but the TVD of information structures is less than $2|\pi-\pi'|$ as shown in \cref{lem:covering}.}

\begin{lemma}\label{lem:covering}
    Given $\delta>0$, $N$, and $\sigma$, for all $P\in \mathcal{P}_\sigma$, there exists $P'\in \mathcal{P}_{\sigma, N}$ so that $d_{TV}(P, P')\le \frac{1}{N}$.
    Similarly, given $\rho>0$, for all $P\in \mathcal{P}_\rho$, there exists $P'\in \mathcal{P}_{\rho, N}$ so that $d_{TV}(P, P')\le \frac{2-\rho}{N}$.  
    $\overline{\supp}(\mathcal{P}_{\sigma, N}) = O(N|\mathcal{S}|)$, and $\overline{\supp}(\mathcal{P}_{\rho, N}) = O(N)$.
\end{lemma}

\Cref{lem:covering} shows $\mathcal{P}_{\sigma, N}$ and $\mathcal{P}_{\rho, N}$ are good coverings for $\mathcal{P}_\sigma$ and $\mathcal{P}_{\rho}$ in TVD respectively, by direct computation.

\begin{proof}[Proof of \cref{thm:uniexperi}]
    Given $\delta, \epsilon>0$, and  $\mathcal{P}_{\sigma} = \{(\pi, \sigma): \pi\in [\delta, 1-\delta]\}$ in \cref{ex:uniexperi}, we run our algorithm in \cref{thm:finite} on a finite collection of information structures $\mathcal{P}_{\sigma, N} = \{(\pi, \sigma):\pi\in [\delta, 1-\delta], N\pi\in \mathbb{N}\}$ with an integer $N = \lceil 2B/\epsilon\rceil $.  Because $|\mathcal{P}_{\sigma, N}| = O(B/\epsilon)$, the running time is polynomial in $B/\epsilon$ and $|\mathcal{S}|$.  For the approximation guarantee, given any ex-post bounded $H'\in \mathcal{H}_{expost}$ and $X\in \mathcal{P}_{\sigma}$, there exists $X'\in \mathcal{P}_{\sigma, N}$ so that 
    $$d_{EM}(X, X')\le \frac{\epsilon}{2B}$$ by \cref{lem:covering,prop:tv2emd}.  Moreover, by \cref{lem:lipschitz} and $N\ge 2B/\epsilon$
    $$|\Jen[X]{H'}-\Jen[X']{H'}|\le 2B\cdot d_{EM}(X,X')\le \epsilon.$$
    Therefore, the optimal $H$ for $\mathcal{P}_{\sigma, N}$ satisfies that
    $$\min_{X\in \X} \Jen[X]{H'}\le \min_{X\in \X'} \Jen[X]{H'}+\epsilon\le \min_{X\in \X'} \Jen[X]{H}+\epsilon$$
    which completes the proof for the ex-post bounded case.  The proof for ex-ante bounded case is similar to the above.
\end{proof}

\begin{remark}\label{rem:general}
Our FPTAS runs the linear program on a finite subset of an infinite set of information structures and provides approximation guarantees as long as the finite subset is an $\epsilon$-covering of the original set under earth mover’s distance by \cref{lem:lipschitz}.  The approach easily extends to any set with efficiently computable small coverings.\footnote{By \cite[Theorem 2.2.11]{panaretos2020invitation}, there always exists an $\epsilon$-covering with size $O(1/\epsilon)$.}  Furthermore, the optimization \cref{prob:opt} can be seen as zero-sum games between a designer choosing a scoring rule and the nature picking an information structure.~\cite{guo2024algorithmic}  Hence, if a best response oracle on $\Theta$ or an $\epsilon$-covering is available, various no-regret and best response algorithms can be employed to solve it.

However, the main challenge of solving \cref{prob:opt} lies in understanding the set $\Theta$.  Unlike standard optimization problems, $\Theta$ is often non-convex, including \cref{ex:uniexperi,ex:beta}.  Consequently, a general efficient algorithm assuming the existence of efficiently computable small coverings or best response oracle may appear vacuous.  Instead, we believe our method of finding and proving the small covering property could be a valuable tool for future research in analyzing their own $\Theta$.

\end{remark}

\section{Simulations}\label{sec:sim}
We compare the performance of common scoring rules (\cref{sec:ps}) within the context of \cref{prob:opt}.  We will focus on $\rho$-correlated information structures (\cref{ex:beta}) in the ex-ante setting.  This collection of information structures offers a platform to evaluate these scoring rules under the uncertainty of prior (and the experiment to less extent).  The other combinations (homogeneous experiment, or ex-post setting) yield comparable outcomes and are deferred to the appendix.  

We consider the correlation of signal and the state is either small $\rho = 0.025$ or large $\rho = 0.25$, and the prior $\pi$ is in $[0.01, 0.99]$.  For proper scoring rules, we use 1) a quadratic scoring rule, 2) a log scoring rule, 3) a $v$-shaped scoring rule with $(a, b, c, x_0) = (-2, 2, 0, 0.5)$ which is optimal when prior is $0.5$ as \cref{thm:uniprior}, and 4) the piecewise linear scoring rule $H_{\rho, N}$ for $\mathcal{P}_{\rho, N}$ derived by our algorithm in \cref{thm:uniexperi} with $N = 50$.  \Cref{fig:func2} shows the associated convex functions for these five proper scoring rules.  We include these scoring rules to investigate three questions: How good are classic scoring rules in our setting (the quadratic and log scoring rules)?  How does misspecified prior harm the optimal scoring rule in the known prior setting ($v$-shaped scoring rule)?  How does our FPTAS generalizes (piecewise linear scoring rules)?

\begin{figure}
\centering
    \begin{minipage}[t]{.32\textwidth}
        \includegraphics[width=\textwidth]{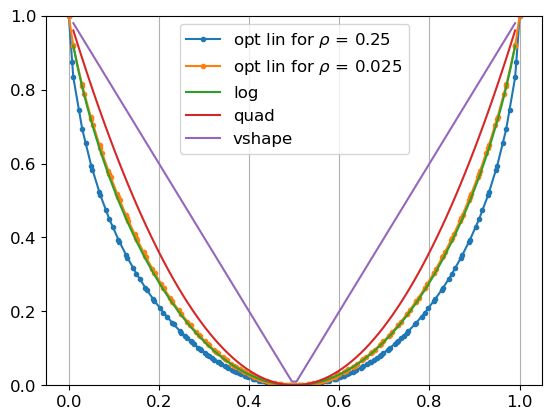}
        \subcaption{Associated convex functions}\label{fig:func2}
    \end{minipage}
    \hfill
    \begin{minipage}[t]{.32\textwidth}
        \includegraphics[width=\textwidth]{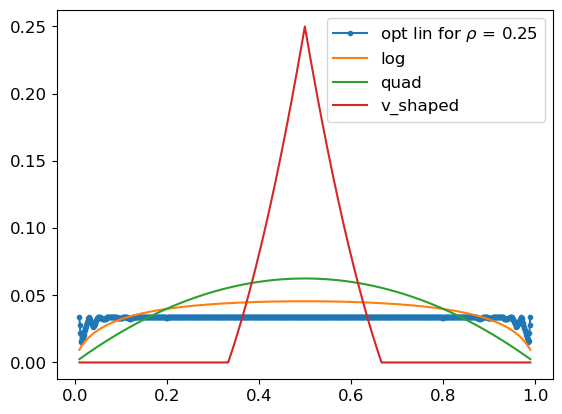}
        \subcaption{Information gain with $\rho = 0.25$.}\label{fig:large}
    \end{minipage}
    \hfill
    \begin{minipage}[t]{.32\textwidth}
        \includegraphics[width=\textwidth]{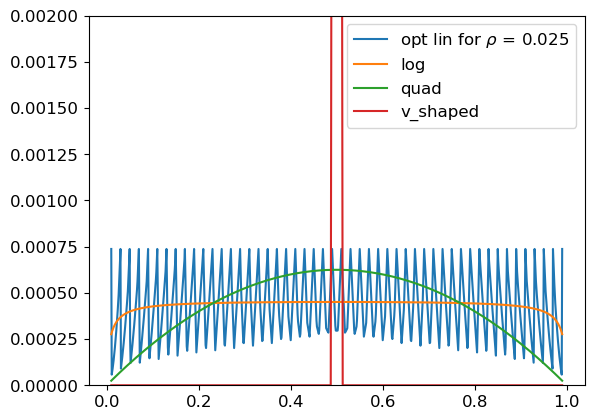}
        \subcaption{Information gain with $\rho = 0.025$.}\label{fig:small}
    \end{minipage}  
    \label{fig:1-2-3}
    \caption{\Cref{fig:func2} shows the five associated convex functions in in \cref{sec:sim}.  By \cref{lem:bounded} all scoring rules are ex-ante bounded by $B = 1$.  \Cref{fig:large,fig:small} present the information gain on each $\rho$-correlated information structures.  $x$ axis is the prior and $y$ axis is the information gain.}
\end{figure}


To begin with, we evaluate all our scoring rules with the information gain on each information structure in $\mathcal{P}_{\rho, N'}$ with $N' = 1000>N$ that serves as a proxy of the infinite collection $\mathcal{P}_\rho$ and is a superset of $\mathcal{P}_{\rho, N}$.  \Cref{fig:large,fig:small} present the outcomes where the $y$-axis represents the information gain, while the $x$-axis denotes the position of the prior.  Note that our piecewise linear scoring rules are optimal for $\mathcal{P}_{\rho, N}$, but only $\epsilon$-optimal for  $\mathcal{P}_{\rho, N'}$ with $\epsilon \le \frac{2-\rho}{N}\le 0.04$ by \cref{thm:uniexperi}.  The difference of information gains between $\mathcal{P}_{\rho, N}$ and $\mathcal{P}_{\rho, N'}$ measure how well our method generalizes.  

\Cref{fig:large} shows the information gain on $\rho$-correlated information structures with large correlation $\rho = 0.25$.  The piecewise linear scoring rule for $\mathcal{P}_{\rho, N}$ has the worst-case information gain  $0.0341$ on the original set $\mathcal{P}_{\rho, N}$, and generalizes well on the superset $\mathcal{P}_{\rho, N'}$ with  the worst-case information gain $0.0149$.  Comparatively, log scoring rule gets $0.0094$, the quadratic scoring rule gets $0.0024$, and the $v$-shaped scoring rule gets $0$ worst-case information gain on $\mathcal{P}_{\rho, N'}$.  The $v$-shaped scoring rule performs varies significantly: It achieves the highest information gain on information structures with prior centered around $0.5$ aligning with the implications of  \cref{thm:uniprior}, but also gets zero information gain when prior is above $0.7$ or below $0.3$.  This is because when the prior is far away from the vertex ($0.5$), information structure can have a support contained in a flat area $(0, 0.5)$ or $(0.5, 1)$ and leading to zero information gain because the Jensen's inequality is tight on affine functions.

\Cref{fig:small} shows the information gains when the correlation is small $\rho = 0.025$.  The piecewise linear scoring rule has the worst-case information gain $5.77\cdot 10^{-5}$ compared to $2.76\cdot 10^{-4}$ under the log scoring rule, $2.48\cdot 10^{-5}$ on quadratic scoring rule, and $0$ under the $v$-shaped scoring rule.  First, the information gains are much small in the small correlation setting, because the posterior is barely move away from prior.\footnote{Indeed, we can show that a correlated signal is Blackwell dominated by a correlated signal with larger correlation, and thus has a smaller information gain under any proper scoring rules.}  Second, information gains under the piecewise linear scoring rule on $\mathcal{P}_{\rho, N'}$ has several periodic peaks each peak.  Note that the piecewise linear can be seen as several small $v$-shaped scoring rules.  If an information structure's prior is at a vertex, the information gain is large.  However, if the support of the  information structure is contained in a flat area, the information gain is near zero.  Noteworthy, the number of vertices of our optimal scoring rule equal $N+1 = 51$ by \cref{thm:finite} which is also the number of peaks in the information gains.  On the other hand, the log and quadratic scoring rule are strictly convex which do not have any sharp transition between flat area and vertex, so the information gains change smoothly as the priors change. 
\begin{figure}
    \centering
\includegraphics[width=0.7\columnwidth]{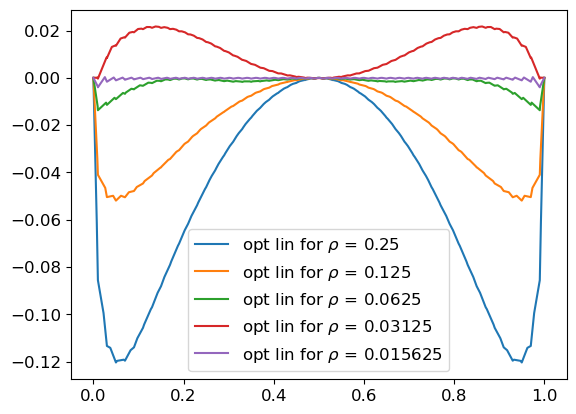}
    \caption{The difference between log scoring rule and the piecewise optimal scoring rules for $\rho$-correlated information structures $\mathcal{P}_{\rho, N}$ as $\rho$ ranging from $1/4$ to $1/64$.}\label{fig:diff}
\end{figure}
Finally, observe that the piecewise linear scoring rules are surprisingly close to the log scoring rule in \cref{fig:func2}.  In \cref{fig:diff}, we conduct an additional simulations and observe that difference between the piecewise scoring rule for $\mathcal{P}_{\rho, N}$ and the log scoring rule uniformly decreases as $\rho$ decreases.  This results suggest an interesting potential connection between $\rho$-correlated information structure and the log scoring rule.



\section{Conclusion}
We propose the problem of optimal proper scoring rules design when the principal has partial knowledge of an agent's signal distribution.  As shown in our simulations, this approach may serve as a benchmark for various scoring rules.  We devise efficient algorithms for four principal knowledge settings, and design a novel coupling to bound the earth mover's distance of posterior by the total variation distance of the signal and state space.


\bibliographystyle{plainnat}
\bibliography{reference}
\newpage
\appendix
\section{Additional Details for Section~\ref{sec:pre}}
\subsection{Basic Properties of Proper Scoring Rules}

\begin{proof}[Proof for \cref{lem:gain}]
By \cref{thm:ps}, the expected scoring of truthfully reporting $x$ is 
\begin{align*}
    &\E_{\omega\sim x}PS(\omega, x)\\
    =& P[\omega = 1] \left(H(x)+\partial H(x)\cdot (1-x)\right)\\
    &+P[\omega = 0]\left(H(x)-\partial H(x)\cdot x\right)\\
    =& x \left(H(x)+\partial H(x)\cdot (1-x)\right)\\
    &+(1-x)\left(H(x)-\partial H(x)\cdot x\right)\\
    =& H(x)
\end{align*}
Thus, the information gain is $\E[H(X)]-H(\E X)$.
\end{proof}

\begin{proof}[Proof of \cref{lem:bounded}]
The proof straightforwardly follows from \cref{lem:gain} and \cref{thm:ps}.
\end{proof}

\begin{lemma}\label{lem:affine}
    Given a proper scoring rule with $H:[0,1]\to \mathbb{R}$ and an affine function $g(x) = \alpha x+\beta$, for any information structure $X$, the gain on $H+g$ where $(H+g)(x) = H(x)+g(x)$ for all $x$ is identical to the gain on $H$,
    $$\Jen[X]{H+g} = \Jen[X]{H}.$$
    If $\alpha>0$, the gain on $g\circ H$ where $g\circ H(x) = g(H(x))$ for all $x$ is the gain on $H$ scaled by $\alpha$, 
    $$\Jen[X]{g\circ H} = \alpha\Jen[X]{H}.$$
\end{lemma}
\begin{proof}
    Because \cref{eq:obj} is linear on the convex function and $\Jen[X]{g} = 0$ by Jensen's inequality on affine functions, $\Jen[X]{H+g} = \Jen[X]{H}+\Jen[X]{g} = \Jen[X]{H}$.  

    $\Jen[X]{g\circ H} = \Jen[X]{\alpha H+\beta} = \alpha \Jen[X]{H}$ by the linearity of information gain. 
\end{proof}
\subsection[Connection to Beta-Bernoulli Model]{Beta-Bernoulli and $\rho$-correlated information structures}
We formalize the connection between $\rho$-correlated information structures and Beta-Bernoulli distributions.


Suppose we want to collect predictions on an outcome of a coin $W$.  Given $p\in [0,1]$, the outcome $W$ follows the Bernoulli distribution $Bern(p)$ with $\Pr[W = 1] = p$, whereas the true value of $p$ is unknown.  If the agent privately observes $N$ i.i.d samples from the coin, how can we incentivize the agent to make one additional observation?


Given $\pi\in [0,1]$, $\mathcal{S} = \{0,1\}$, and $m\in \mathbb{N}_{>0}$, let $Beta(\pi, m)$ be a Beta distribution where the probability density at $p\in [0,1]$ is proportional to $p^{\pi m-1}(1-p)^{(1-\pi)m-1}$.  Here $\pi$ is the mean and $m$ is the effective sample size.  We define a \emph{Beta-Bernoulli information structure} with $\pi, m$ as follow: $p$ is sampled from $Beta(\pi, m)$.  $W$ and $S$ are sampled independently and identically from $Bern(p)$.

The agent starts with an uninformative prior,\footnote{the parameter of the coin $p$ is a uniform distribution on $[0,1]$} and  observes $n_1$ heads and $n-n_1$ tails.  By Bayes' rule, the agent's prior on $W = 1$ is $\pi = \frac{n_1+1}{n+2}$, and the posterior predictive distribution is $X = P(W = 1\mid S)$ that is a Beta-Bernoulli information structure with $(\pi, n+2)$, 
    $$X(s) =\begin{cases}
    \frac{(n+2)\pi+1}{n+3} &s = 1\text{ with probability }\pi\\
    \frac{(n+2)\pi}{n+3} &s = 0\text{ with probability } 1-\pi
    \end{cases}.$$
Alternatively, the conditional distribution of signal (experiment) satisfies $\sigma(1|1) = (1-\frac{1}{n+3})\pi+\frac{1}{n+3}$ and $\sigma(1|0) = (1-\frac{1}{n+3})\pi$.  Therefore, the Beta-Bernoulli information structure with $\pi, n+2$ is $(\frac{1}{n+3})$-correlated information structure with prior $\pi$.

\section{Proof and Details for Section~\ref{sec:prior}}

The following lemma shows that replacing $H$ by a $v$-shaped $\tilde{H}$ with the vertex at the prior has a better or equal information gain.  
\begin{lemma}\label{lem:vshape}
    Given an information structure $X$ with prior $\pi\in (0,1)$ and a scoring rule with $H$, there exists a $v$-shaped scoring rule with $\tilde{H} = H_{a, b, c, x_0}$ where $x_0 = \pi$, $a = \frac{H(\pi)-H(0)}{\pi}$, $b = \frac{H(1)-H(\pi)}{1-\pi}$, and $c = H(\pi)$, so that
    $$\Jen[X]{\tilde{H}}\ge \Jen[X]{H}.$$
    Additionally, if $H$ is in $\mathcal{B}_{exante}$ or $\mathcal{B}_{expost}$, $\tilde{H}$ is in $\mathcal{B}_{exante}$ or $\mathcal{B}_{expost}$ respectively.  
\end{lemma}
\begin{proof}[Proof of \cref{lem:vshape}]
    We first show the information gain of the $v$-shaped scoring rule is no less than the original scoring rule.  Then we show the $v$-shaped scoring is also bounded. 

    By definition and convexity, $\tilde{H}(\pi) = H(\pi)$, $\tilde{H}(1) = H(1)$, and $\tilde{H}(0) = H(0)$.  Because $H$ is convex, for all $x$, $H(x)\le \tilde{H}(x)$ and $H(x)-H(\pi)\le \tilde{H}(x)-\tilde{H}(\pi)$.  Therefore, 
    \begin{align*}
        \Jen[X]{H} =& \E[H(X)]-H(\E X)\\
        =& \E[H(X)-H(\pi)]\\
        \le& \E[\tilde{H}(X)-\tilde{H}(\pi)]\\
        =& \Jen[X]{\tilde{H}}
    \end{align*}
    which completes the first part.  

    \textbf{ex-ante bounded}
    On the other hand, if $H$ is ex-ante bounded with $B$, $H(x)\in [0,B]$ for all $x\in [0,1]$.  Then $\tilde{H}(x)\ge H(x)\ge 0$ for all $x$.  For the upper bound, $\tilde{H}(x)\le \max \tilde{H}(0), \tilde{H}(1)\le \max H(0), H(1)\le B$ by Jensen's inequality.  

    \textbf{ex-post bounded}
    If $H$ is ex-post bounded by $B$, $H(0), H(1) \le B$, $H(0)+\partial H(0), H(1)-\partial H(1)\ge 0$, and $H(\pi)\in [0,B]$ by \cref{lem:bounded}.  By the definition of $v$-shaped scoring rules and \cref{thm:ps}, there are five possible values of scores
    \begin{align*}
        \widetilde{PS}(x, \omega) =&  \begin{cases}-ax_0+c\text{ if }x\le x_0, \omega = 0,\\
            -bx_0+c\text{ if }x> x_0, \omega = 0,\\
            a(1-x_0)+c\text{ if }x\le x_0, \omega = 1,\\
            b(1-x_0)+c\text{ if }x> x_0, \omega = 1\end{cases}\\
            =& \begin{cases}
            H(0)\text{ if }x\le x_0, \omega = 0,\\
            H(1)-\frac{H(1)-H(\pi)}{1-\pi}\text{ if }x> x_0, \omega = 0,\\
            H(0)+\frac{H(\pi)-H(0)}{\pi}\text{ if }x\le x_0, \omega = 1,\\
            H(1)\text{ if }x> x_0, \omega = 1
            \end{cases}.
    \end{align*}
    Thus, $\widetilde{PS}(x, \omega)\in [0,B]$ for the first and fourth cases.  For the third case, because $H(0)+\partial H(0)\pi\le H(\pi)$, 
    \begin{align*}
        &H(0)+\frac{H(\pi)-H(0)}{\pi}\\
        \ge& H(0)+\frac{H(0)+\partial H(0)\pi-H(0)}{\pi}\\
        =& H(0)+\partial H(0)\ge 0,
    \end{align*}
    and because $H(\pi)\le (1-\pi) H(0)+\pi H(1)$
    \begin{align*}
        &H(0)+\frac{H(\pi)-H(0)}{\pi}\\
        \le& H(0)+\frac{(1-\pi) H(0)+\pi H(1)-H(0)}{\pi}\\
        =& H(1)\le B.
    \end{align*}
    The second cases follows similarly.  Therefore, $\tilde{H}$ is also ex-post bounded by $B$.
\end{proof}

\begin{proof}[Proof of \cref{thm:singleton}]

\textbf{ex-ante setting}
Let $H^\star$ be the ex-ante bounded $v$-shaped convex function in \cref{thm:singleton} which is in $\mathcal{B}_{exante}$ by direct computation, and $H$ be an arbitrary convex function $\mathcal{B}_{exante}$.  Since $\mathcal{X} = \{X\}$ is singleton, by \cref{lem:vshape}, there exists a $v$-shaped convex function $\tilde{H}\in \mathcal{B}_{exante}$ with $x_0 = \pi$, $a = \frac{H(\pi)-H(0)}{\pi}$, $b = \frac{H(1)-H(\pi)}{1-\pi}$, and $c = H(\pi)$ where $\Jen[X]{\tilde{H}}\ge \Jen[X]{H}$.  

Now we show $\Jen[X]{H^\star}\ge \Jen[X]{\tilde{H}}$.  Let 
$$C = \tilde{H}(0)(1-\pi)+\tilde{H}(1)\pi-\tilde{H}(\pi)\in [0,B]$$ because $\tilde{H}\in \mathcal{B}_{exante}$.  If $C = 0$, $\tilde{H}$ is a linear function, $0 = \Jen[X]{\tilde{H}}\le \Jen[X]{H^\star}$ by the Jensen's inequality.  If $C>0$, we can convert $\tilde{H}$ to $H^\star$ through an affine transformation: 
$$H^\star(x) = \frac{B}{C}\left(\tilde{H}(x)-\tilde{H}(1)x-\tilde{H}(0)(1-x)\right)+B$$ for all $x\in [0,1]$.  By \cref{lem:affine} and $B\ge C>0$, $\Jen[X]{H^\star} = \frac{B}{C}\Jen[X]{\tilde{H}}\ge \Jen[X]{\tilde{H}}$ which completes the proof.

\textbf{ex-post setting}
Let $H^\star$ be the ex-post bounded $v$-shaped convex function in \cref{thm:singleton} which is in $\mathcal{B}_{expost}$ by direct computation, and $H$ be an arbitrary convex function in $\mathcal{B}_{expost}$.  By \cref{lem:vshape}, there exists a $v$-shaped convex function $\tilde{H}\in \mathcal{B}_{expost}$ with $x_0 = \pi$, $a = \frac{H(\pi)-H(0)}{\pi}$, $b = \frac{H(1)-H(\pi)}{1-\pi}$, and $c = H(\pi)$ where $\Jen[X]{\tilde{H}}\ge \Jen[X]{H}$.

Now we show $\Jen[X]{H^\star}\ge \Jen[X]{\tilde{H}}$.  Because $\tilde{H}\in \mathcal{B}_{expost}$, for the score of outcome $\omega = 1$, $H(0)+\partial H(0)$ and $H(1)$ are in $[0,B]$ by \cref{lem:bounded}, and $H(1)\ge H(0)+\partial H(0)$ by convexity.  Thus, 
$H(1)-H(0)-\partial H(0) = (b-a)(1-x_0)\in [0,B].$
Similarly, for the score of outcome $\omega = 0$, we have
$H(0)-H(1)+\partial H(1) = (b-a)x_0\in [0,B]$.  Therefore, 
$$0\le {b-a}\le \frac{B}{\max \pi, 1-\pi}.$$
If $b = a$, $\tilde{H}$ is a positive affine function, and $\Jen[X]{H^\star}\ge \Jen[X]{\tilde{H}} = 0$.  Otherwise, $L:= \frac{B}{\max (\pi, 1-\pi)(b-a)}\ge 1$ and we can convert $\tilde{H}$ to $H^\star$ through an affine transformation:
$$H^\star(x) = L\left(\tilde{H}(x)-\frac{a+b}{2}x-\tilde{H}(\pi)+\frac{a+b}{2}\pi\right)+\frac{B}{2},$$
for all $x\in [0,1]$.
Specifically, the above function is $v$-shaped with $x_0' = \pi$, $a' = \frac{-B}{2\max \pi, 1-\pi}$, $b' = \frac{B}{2\max \pi, 1-\pi}$, and $c' = \frac{B}{2}$.  
By \cref{lem:affine} and $L\ge 1$, $\Jen[X]{H^\star} = L\Jen[X]{\tilde{H}}\ge \Jen[X]{\tilde{H}}$.
\end{proof}

\section{Proof of Theorem~\ref{thm:finite}}
The idea is to construct a linear program whose variables contain the evaluations of $H$ in $\overline{\supp}(\X)$.  To formulate this, we introduce some notations.
Given $|\X| = n$, we set $\X = \{X_i: i = 1, \ldots, n\}$.  For each $i\in [n]$, let the support of $X_i$ be $\supp(X_i) = \{x_{i,j}:j\in [m_i]\}$ with size $|\supp(X_i)| = m_i$, the expectation be $x_{i,0} = \E X_i$, and $\Pr(X_i = x_{i,j}) = p_{i,j}$.  We set $\overline{\supp}(X_i) = \{x_{i,j}: j = 0, \ldots, m_i\}$ and $\overline{\supp}(\X) = \{x_{i,j}:i\in [n], j = 0, \ldots, m_i\}$.  Let $m= \sum_i (1+m_i)$ be the size of $\overline{\supp}(\X)$ and is less than $|\X|\cdot|\mathcal{S}|$.  We further use $\mathcal{A} = \{(i,j):i\in [n], j = 0, \ldots, m_i\}$ to denote the set of indices.  Finally, we set the vertices of the probability simplex $[0,1]$ be $x_{0} = 0$ and $x_1 = 1$, and $\bar{\mathcal{A}} :=\mathcal{A}\cup \{x_0, x_1\}$.  To simplify the notations, we assume $\overline{\supp}(\X)$ does not contain $0$ or $1$, and for all distinct $\alpha, \alpha'$ in $\mathcal{A}$, $x_{\alpha}\neq x_{\alpha'}$.\footnote{Otherwise, we just need to add some equality constraints.  For instance, if $x_\alpha = x_{\alpha'}$, we need to set $h_\alpha = h_{\alpha'}$ as a  constraint.}

Note that the objective value only depends on a finite number of values.  Specifically, let $\obj[\X]{H} := \min_{X\in \mathcal{X}}  \Jen[X]{H}$.  Given $H(x_{\alpha}) = h_\alpha$ for any $\alpha\in \mathcal{A}$, the objective, \cref{eq:obj}, is 
\begin{equation}\label{eq:bin2}
    \obj[\X]{H}  = \min_{i\in [n]} \sum_{k = 1}^{m_i}p_{i,j}h_{i,j}-h_{i,0}.
\end{equation}
Thus, we can first decide $h_\alpha$ to maximize \cref{eq:bin2}, and ``connect'' those points $(x_\alpha, h_\alpha)$ to construct a piece-wise linear function.   To ensure the resulting function is convex, we further require there exists a supporting hyperplane for each $(x_\alpha, h_\alpha)$--- for each $\alpha$ there exists $g_{\alpha}\in \R$ such that $h_{\alpha'}\ge h_{\alpha}+ g_{\alpha}^\top(x_{\alpha'}-x_{\alpha})$ for all $\alpha'\neq \alpha$.

In summary, we set the convex function to be 
\begin{equation}\label{eq:opt_bin}
    H(x) = \max\left\{\max_{\alpha\in \bar{\mathcal{A}}} h_{\alpha}+g_{\alpha}(x-x_\alpha), \min_\alpha h_\alpha\right\},
\end{equation}

\textbf{ex-ante setting} Now in the ex-ante bounded setting with $B = 1$, by \cref{lem:bounded} the collection of $h_{\alpha}$ and $g_{\alpha}$ is a solution of the following linear program,
\begin{equation}\label{eq:lp_bin}
\begin{aligned}
& \max && \min_{i\in [n]} \sum_{k = 1}^{m_i}p_{i,j}h_{i,j}-h_{i,0},\\
& \text{subject to} && h_{\alpha}\in [0,1], &\forall \alpha\in \bar{\mathcal{A}},\\
&&& h_{\alpha'}\ge h_{\alpha}+g_{\alpha}^\top(x_{\alpha'}-x_\alpha),&\forall \alpha, \alpha' \in  \bar{\mathcal{A}}.
\end{aligned}
\end{equation}
The above linear program has $2|\bar{\mathcal{A}}| = O(m)$ variables and $|\bar{\mathcal{A}}|+|\bar{\mathcal{A}}|^2 = O(m^2)$ constraints, so we can solve it in polynomial time with respect to $m$.  

Now we need to show $H$ is convex, bounded in $[0,1]$, and optimal.  It is easy to see for all $\alpha\in \bar{\mathcal{A}}$,
\begin{equation}\label{eq:bin1}
    H(x_{\alpha}) = h_{\alpha}, 
\end{equation}
because 
\begin{align*}
    &H(x_{\alpha})\\
    =& \max\left\{h_{\alpha}, \max_{\alpha'\neq \alpha} h_{\alpha'}+g_{\alpha'}^\top(x_{\alpha}-x_{\alpha'}), \min_{\alpha'} h_{\alpha'}\right\}\tag{$h_{\alpha}+g_{\alpha}^\top(x_{\alpha}-x_{\alpha}) = h_{\alpha}$}\\
    =& \max\left\{h_{\alpha}, \max_{\alpha'\neq \alpha} h_{\alpha'}+g_{\alpha'}^\top(x_\alpha-x_{\alpha'})\right\}\tag{$h_{\alpha}\ge \min_{\alpha'} h_{\alpha'}$}\\
    =& h_{\alpha}\tag{by the constraints in \cref{eq:lp_bin}}
\end{align*}
First because $H$ is the maximum of a collection of linear functions, $H$ is convex.   Second, for the lower bound, by the constraints in \cref{eq:lp_bin} $\min h_\alpha\ge0$ so $0\le \min h_\alpha\le H(x)$ due to \cref{eq:opt_bin}.  For the upper bound, because $H$ is convex, for all $x\in [0,1]$, $H(x)\le \max_k \{H(x_k)\} = \max_k \{h_{k}\}\le 1$ by  \cref{eq:bin1,eq:lp_bin}.  
Finally, for any bounded convex function $\tilde{H}\in \mathcal{H}$, we set $\tilde{h}_\alpha = \tilde{H}(x_\alpha)$ for $\alpha\in \bar{\mathcal{A}}$.  At each $x_\alpha$ we can find a vector $\tilde{g}_\alpha$ such that 
$\tilde{H}(x)\ge \tilde{H}(x_\alpha)+\tilde{g}_\alpha^\top(x-x_\alpha)$
for all $x\in \Delta_d$.\footnote{Specifically, we can construct $\tilde{g}_\alpha$ by finding a support hyperplane to the epigraph of $\tilde{H}$ at $(x_\alpha, h_{\alpha})$, and the vector $\tilde{g}_\alpha$ is called subgradient.}  Since $\tilde{H}$ is convex and in $\mathcal{H}$, the collection of $\tilde{h}_\alpha$ and $\tilde{g}_\alpha$ is a feasible solution to \cref{eq:lp_bin}, and $\obj[\X]{\tilde{H}}\le \obj[\X]{H}$.

\textbf{ex-post setting}  The ex-post setting is identical except the first bounded condition\begin{equation}\label{eq:lp_bin_post}
\begin{aligned}
& \max && \min_{i\in [n]} \sum_{k = 1}^{m_i}p_{i,j}h_{i,j}-h_{i,0},\\
& \text{s.t.} && h_{\alpha}+ g_{\alpha}^\top(1-x_{\alpha}), h_{\alpha}-g_{\alpha}^\top x_{\alpha}\in [0,1], &\forall \alpha\in \bar{\mathcal{A}},\\
&&& h_{\alpha'}\ge h_{\alpha}+g_{\alpha}^\top(x_{\alpha'}-x_\alpha),&\forall \alpha, \alpha' \in  \bar{\mathcal{A}}.
\end{aligned}
\end{equation}
The above linear program has $O(m)$ variables and $4|\bar{\mathcal{A}}|+|\bar{\mathcal{A}}|^2 = O(m^2)$ constraints which can be solve efficiently.  Finally, using a similar argument above we can show that $H$ is convex, ex-post bounded and optimal.  

\section{Proofs and Details for Section~\ref{sec:infinite}}


\subsection{Proofs for Lemmas~\ref{lem:lipschitz} and \ref{lem:covering}}
\begin{proof}[Proof of \cref{lem:lipschitz}]
    We first show a bounded $H$ is Lipschitz in the support of $X$ and $X'$.  First, since $H$ is convex, for all $x\neq x'$, $H(x)\ge H(x')+\partial H(x')(x-x')$ and $H(x')\ge H(x)+\partial H(x)(x'-x)$.  Hence, $\partial H(x)(x'-x)\le H(x')-H(x)\le \partial H(x')(x'-x)$, and 
    $$|H(x')-H(x)|\le \max(|\partial H(x')|, |\partial H(x)|)|(x'-x)|.$$
    That is, the maximum norm of subgradients can bound the Lipschitz constant.  

    If $H\in \mathcal{B}_{expost}$, taking the difference between two terms in \cref{lem:bounded}, we have 
    $|\partial H(x)|\le B$ for all $x$, so 
    $|H(x')-H(x)|\le \max_{x\in [0,1]}|\partial H(x)||x'-x| \le B|x'-x|.$  Thus, 
    \begin{align*}
        &|\Jen[X]{H}-\Jen[X']{H}|\\
        \le& |\E_X H(X)-\E_{X'} H(X')|+ |H(\E X)-H(\E X')|\\
        \le&  B d_{EM}(X,X')+ B|\E X-\E X'|\\
        \le& 2Bd_{EM}(X,X').
    \end{align*}

    On the other hand, if $H\in \mathcal{B}_{exante}$, by \cref{lem:bounded}, we have $H(1)-H(x)$ and $H(0)-H(x)$ are less than $B$ for all $x$.  Since $H$ is convex, $\partial H(x)(1-x)\le H(1)-H(x)\le B$ and $\partial H(x)(-x)\le H(0)-H(x)\le B$.  If $x\in [1/L, 1-1/L]$, $\partial H(x)\in [-B/x, B/(1-x)]\in [-LB, LB]$.  The rest of the proof follows the ex-post setting.
\end{proof}

\begin{proof}[Proof of \cref{lem:covering}]
    For any $\pi\in [\delta, 1-\delta]$, there exists $\pi'$ so that $(\pi', \sigma)\in \mathcal{P}_{\sigma, N}$ and $|\pi-\pi'|\le 1/(2N)$.  Then the TVD between $P = (\pi, \sigma)$ and $P' = (\pi', \sigma)$ is 
    \begin{align*}
        &\frac{1}{2}\sum_{w,s} |P[w, s]-P'[w,s]|\\
        =& \frac{1}{2}\sum_s |\pi \sigma(s|1)-\pi'\sigma(s|1)|+|(1-\pi) \sigma(s|0)-(1-\pi')\sigma(s|0)|\\
        =& \frac{1}{2}|\pi-\pi'|\sum_s \sigma(s|1)+\sigma(s|0)\\
        =& |\pi-\pi'|\le \frac{1}{2N}
    \end{align*}

    Now we consider $\rho$-correlated information structures.  First, given $(\pi, \sigma)\in \mathcal{P}_\rho$, 
    \begin{align*}
        P(W = 1, S = 1) =& \pi(\rho+(1-\rho)\pi)\\
        P(W = 1, S = 0) =& \pi(1-\rho)(1-\pi)\\
        P(W = 0, S = 1) =& (1-\pi)(1-\rho)\pi\\
        P(W = 1, S = 0) =& (1-\pi)(\rho+(1-\rho)(1-\pi))
    \end{align*}
    Similar, there exists $(\pi', \sigma')\in \mathcal{P}_{\rho, N}$ with $|\pi-\pi'|\le \frac{1}{2N}$ and the TVD between them is the sum of the following four terms $|P[w, s]-P'[w,s]|$ with $w, s = 0,1$ divided by $2$.
    First,
    \begin{align*}
        &|P(W = 1, S = 1)-P'(W = 1, S = 1)|\\
        =& |\pi(\rho+(1-\rho)\pi)-\pi'(\rho+(1-\rho)\pi')|\\
        \le& \rho|\pi-\pi'|+(1-\rho)|\pi^2-(\pi')^2|
    \end{align*}
    Second,
    \begin{align*}
        &|P(W = 0, S = 0)-P'(W = 0, S = 0)|\\
        =& |(1-\pi)(\rho+(1-\rho)(1-\pi))-(1-\pi')(\rho+(1-\rho)(1-\pi'))|\\
        \le& \rho|\pi-\pi'|+(1-\rho)|(1-\pi)^2-(1-\pi')^2|
    \end{align*}
    Finally, $|P(W = 1, S = 0)-P'(W = 1, S = 0)| = |P(W = 0, S = 1)-P'(W = 0, S = 1)|$ and 
    \begin{align*}
        &|P(W = 1, S = 0)-P'(W = 1, S = 0)|\\
        =& |\pi(1-\rho)(1-\pi)-\pi'(1-\rho)(1-\pi')|\\
        =& (1-\rho)|\pi(1-\pi)-\pi'(1-\pi')|\\
        \le& (1-\rho)|\pi-\pi'|\tag{by Taylor expension}
    \end{align*}
    Combining these three, we have
    
    \begin{align*}
        &\frac{1}{2}\sum_{w,s} |P[w, s]-P'[w,s]|\\
        \le& \frac{1}{2}\rho|\pi-\pi'|+\frac{1}{2}(1-\rho)|\pi^2-(\pi')^2|\\
        &+\frac{1}{2}\rho|\pi-\pi'|+\frac{1}{2}(1-\rho)|(1-\pi)^2-(1-\pi')^2|\\
        &+(1-\rho)|\pi-\pi'|\\
        =& |\pi-\pi'|+(1-\rho)|\pi-\pi'|\\
        =& \frac{2-\rho}{2N}
    \end{align*}
\end{proof}
\subsection{EMD, TVD, and Couplings}
This section introduces couplings which can be skipped for familiar readers.  For any two distributions $P$ and $Q$ on a common domain $\Omega$ with a metric $d:\Omega^2\to \R$, let $\Gamma(P,Q)$ denote the set of all joint distributions on $\Omega\times \Omega$ with marginals $P$ and $Q$. A joint distribution $\gamma\in \Gamma(P,Q)$ is called a \emph{coupling} or \emph{transportation plan} between $P$ and $Q$.

The \emph{Wasserstein distance} between $P$ and $Q$ is $\inf_{\gamma \in \Gamma(P,Q)}\E_{(x, y)\sim \gamma}[d(x, y)]$.  Given a real-valued function $f$ on $\Omega$, 
$$\|f\|_{L, d}:= \sup_{x\neq y\in \Omega}|f(x)-f(y)|/d(x, y).$$ Note that when $\Omega = [0,1]$ and $d = |\cdot|$ is absolute value, $\|f\|_{L,d} = \|f\|_{Lip}$.

Now we state the Kantorovich-Rubinstein Theorems (Theorem 11.8.2~\citep{dudley2018real}) which shows the duality between Wasserstein distance and earth mover's distance/total variation distance: 
    For any metric $d$ and two distributions $P, Q$ on $\Omega$, 
    \begin{equation}\label{eq:dual}
        \sup_{f: \|f\|_{L,d} = 1} \E_{x\sim P} f(x)-\E_{y\sim Q}f(y) = \inf_{\gamma\in \Gamma(P, Q)} \E_{(x, y)\sim \gamma}[d(x,y)].
    \end{equation}
In particular, by taking $d$ as $1$-norm, 
\begin{equation}\label{eq:dual_emd}
    d_{EM}(P, Q) = \inf_{\gamma\in \Gamma(P, Q)} \E_{(x, y)\sim \gamma}[|x-y|].
\end{equation}

Taking $d$ as the discrete metric where $d(x,y) = 1$ for all $x\neq y$, 
$$d_{TV}(P,Q) = \inf_{\gamma\in \Gamma(P, Q)} \E_{(x, y)\sim \gamma}[\mathbf{1}[x\neq y]].$$  

\subsection{Proof of Lemma~\ref{prop:tv2emd}}

\begin{proof}[Proof of \Cref{prop:tv2emd}]
We will use the maximal coupling between information structures on the signal space to bound the EMD between posterior predictions.  Specifically, we sample $S = s$ with probability $P[S = s]$ and output $\hat{X} = P[W = 1\mid S = s]$.  Then, let $\hat{X}' = P'[W = 1\mid S = s]$ with probability $\min(1, P'[S = s]/P[S = s])$ and any other feasible value  satisfying the marginal distribution: $\hat{X} = X$ and $\hat{X}' = X'$ in distribution.

Let $d_{TV}(P, P') = \delta$.  We define the matching event as
    $\mathcal{E}:=\{(\hat{X}, \hat{X}'): \exists s\in \mathcal{S}, \hat{X} = P[W = 1\mid S = s], \hat{X}' = P'[W = 1\mid S = s]\}.$ Then the probability of not matching is 
    \begin{equation}\label{eq:tv2emd1}
        \begin{aligned}
        &1-\sum_{s}P[S = s]\min\left(1, \frac{P'[S = s]}{P[S = s]}\right)\\
        =& \frac{1}{2}\sum_{s}\left|P[S = s]-P'[S = s]\right|\\
        \le& d_{TV}(P, P') = \delta
    \end{aligned}
    \end{equation}
    By the dual form of EMD in~\Cref{eq:dual}, the above coupling satisfies
    \begin{align*}
        &d_{EM}(X,X')\\
        \le& \E_\gamma\left[|\hat{X}-\hat{X}'|\right]\\
        =& \E_\gamma\left[|\hat{X}-\hat{X}'|\mid \neg \mathcal{E}\right]\Pr[\neg \mathcal{E}]+\E_\gamma\left[|\hat{X}-\hat{X}'|\mid \mathcal{E}\right]\Pr[\mathcal{E}].
    \end{align*}
    We bound these two terms separately. First, because the difference between predictions is always bounded by $1$, with \Cref{eq:tv2emd1}
    $$\E\left[|\hat{X}-\hat{X}'|\mid \neg \mathcal{E}\right]\Pr[\neg \mathcal{E}]\le \max_{x, x'\in [0,1]}|x-x'|\Pr[\neg \mathcal{E}]\le \delta.$$
    Now we bound the second term.  Given $s\in \mathcal{S}$, we set $\delta(s) = \sum_w|P[W = w,S = s]-P'[W = w,S = s]|$ and write $P[W = 1, S = s] = P[1, s]$ and $P[S = s] = P[s]$. 
 We have
    \begin{align*}
        &|P[W = 1|S = s]-P'[W = 1|S = s]|\\
        =& \frac{1}{P[s]P'[s]}\left|P[1, s]P'[s]-P'[1, s]P[s]\right|\\
        =& \frac{1}{P[s]P'[s]}\left|P[1, s]P'[0,s]-P[0, s]P'[1, s]\right|\\
        \le& \frac{1}{P[s]P'[s]}\big(\left|P[1, s]P'[0,s]-P[1, s]P[0,s]\right|\\
        &+\left|P[1, s]P[0,s]-P[0, s]P'[1, s]\right|\big)\\
        \le & \frac{1}{P[s]P'[s]}\left(P[1, s]\delta(s)+P[0,s]\delta(s)\right)\\
        =& \frac{P[s]\delta(s)}{P[s]P'[s]}.
    \end{align*}
    Using a similar argument, we have $|P[W = 1|s]-P'[W = 1|s]|\le \frac{\delta(s)}{\max P[s], P'[s]}$.  Therefore, we have
    \begin{align*}
         &\E\left[|\hat{X}-\hat{X}'|\mid \mathcal{E}\right]\Pr[\mathcal{E}]\\
        =& \sum_{s}P[s]\frac{\min(P[s], P'[s])}{P[s]}\left|P[W = 1|s]-P'[W = 1|s]\right|\\
         =& \sum_{s}\min(P[s], P'[s])\left|P[W = 1|s]-P'[W = 1|s]\right|\\
         \le& \sum_{s}\min(P[s], P'[s])\frac{\delta(s)}{(\max P[s], P'[s])}\\
         \le& \sum_{s}\delta(s) = \delta
    \end{align*}
    That completes the proof.
\end{proof}

\section{Additional Simulations}
This section includes additional simulation for $\rho$-correlated model on ex-post model, and information structures with homogeneous experiments.
\subsection[Beta-Bernoulli information structures in the ex-post setting]{$\rho$-correlated information structures in the ex-post setting}

First we plot the associated convex functions for ex-post bounded setting analogous to \cref{sec:sim} in the top of \cref{fig:beta_post}.  We use 1) a quadratic scoring rules for the ex post bounded case with an associated convex function $(x-1/2)^2+3/4$, 2) a $v$-shaped scoring rule with $(a, b, c, x0) = (-1, 1, 1/2, 1/2)$, and the piecewise linear scoring rule for $\mathcal{P}_{\rho, N}$ with $N = 50$ and $\delta = 0.05$ in the ex-post setting with $\rho = 0.25$ and $0.025$.  Note that the log scoring rule cannot be ex-post bounded. The bottom two plots in \cref{fig:beta_post} are the information gains under these three functions on $\mathcal{P}_{\rho, N'}$ with $N' = 1000$, $\rho = 0.25, 0.025$, and $\delta = 0.05$. 


\begin{figure}
\centering
    \begin{minipage}[t]{.32\textwidth}
        \includegraphics[width=\textwidth]{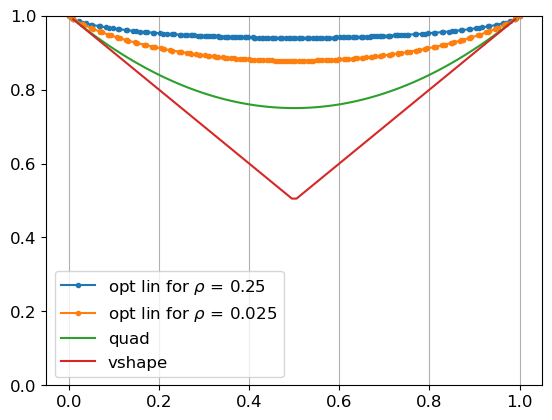}
    \end{minipage}
    \hfill
    \begin{minipage}[t]{.32\textwidth}
        \includegraphics[width=\textwidth]{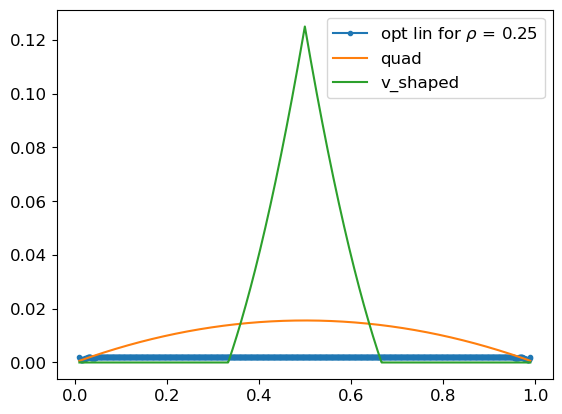}
    \end{minipage}
    \hfill
    \begin{minipage}[t]{.32\textwidth}
        \includegraphics[width=\textwidth]{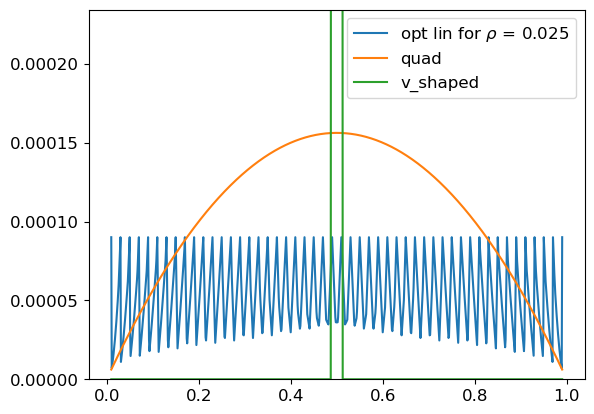}
    \end{minipage}  \caption{The top plot is the associated convex functions for $\rho$-correlated information structures in the ex-post bounded setting with $B = 1$.  The middle one is the information gain when $\rho = 0.25$, and the bottom one is the information gain when $\rho = 0.025$.
    }\label{fig:beta_post}
\end{figure}


\begin{figure}
\centering
    \begin{minipage}[t]{.32\textwidth}
        \includegraphics[width=\textwidth]{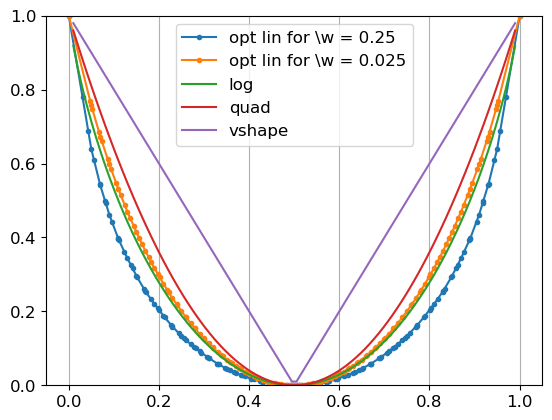}
    \end{minipage}
    \hfill
    \begin{minipage}[t]{.32\textwidth}
        \includegraphics[width=\textwidth]{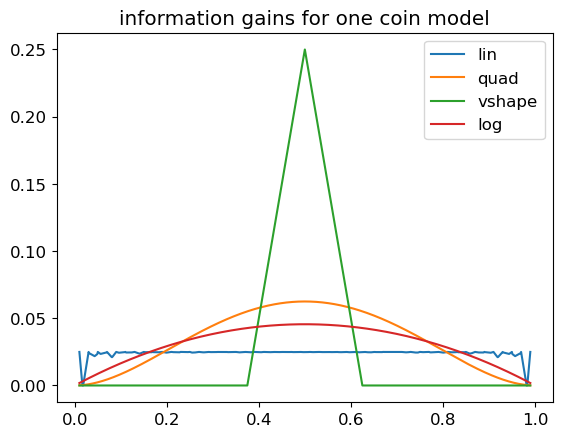}
    \end{minipage}
    \hfill
    \begin{minipage}[t]{.32\textwidth}
        \includegraphics[width=\textwidth]{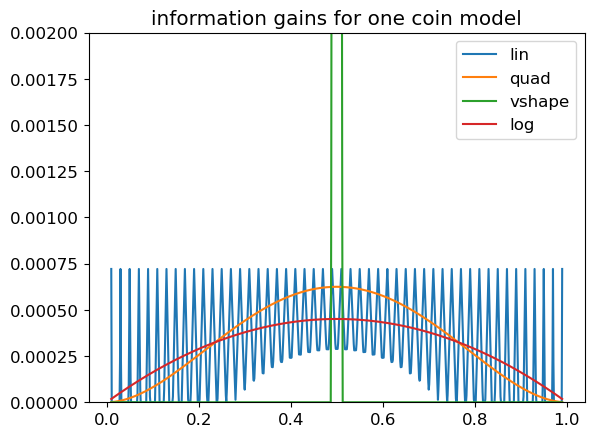}
    \end{minipage}
    \caption{The top plot is the associated convex functions for information structures with homogeneous one coin model in the ex-ante bounded setting with $B = 1$.  The middle one is the information gain when $\rho = 0.25$, and the bottom one is the information gain when $\rho = 0.025$.
    }\label{fig:coin_ante}
\end{figure}


\subsection{Information structures with homogeneous experiment}
Now, we test scoring rules' performance on information structures with homogeneous experiment.  We use a simple one coin model with parameter $\xi\in [0,1]$ so that the signal space is binary and the experiment satisfies $\sigma(1|1) = \sigma(0|0) = \frac{1+\xi}{2}$ and $\frac{1-\xi}{2}$ otherwise.  Similar to $\rho$-correlated information structures, the value of $\xi$ control the correlation between the state and signal. 

The top plot in \cref{fig:coin_ante} are the associated convex functions for ex-ante bounded setting analogous to \cref{sec:sim}.  Note that similar to \cref{fig:func2}, the optimal solution for the large correlation setting ($\xi = 0.25$) is more curved at the boundary instead of center. (more $u$-shaped than $v$-shaped).  Intuitively, this is due to the movement of posterior is relatively smaller at the boundary as the correlation is increased.  Then bottom two plot in \cref{fig:coin_ante} are the information gains of these three functions on $\mathcal{P}_{\rho, N'}$ with $N = 1000$, $\delta = 0.05$, and $\xi = 0.25, 0.025$. Then \cref{fig:coin_post} presents results for the ex-post setting.



\begin{figure}
\centering
    \begin{minipage}[t]{.32\textwidth}
        \includegraphics[width=\textwidth]{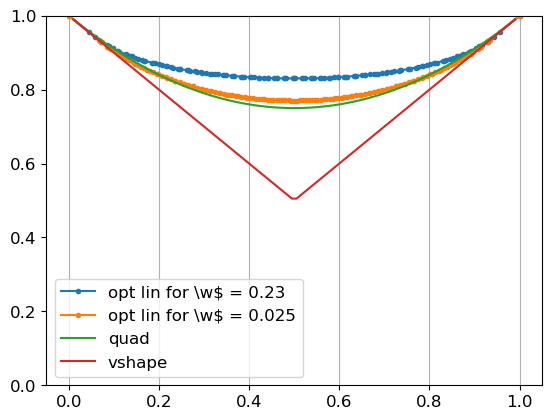}
    \end{minipage}
    \hfill
    \begin{minipage}[t]{.32\textwidth}
        \includegraphics[width=\textwidth]{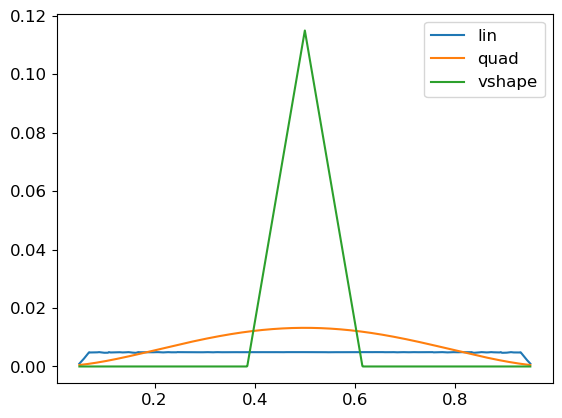}
    \end{minipage}
    \hfill
    \begin{minipage}[t]{.32\textwidth}
        \includegraphics[width=\textwidth]{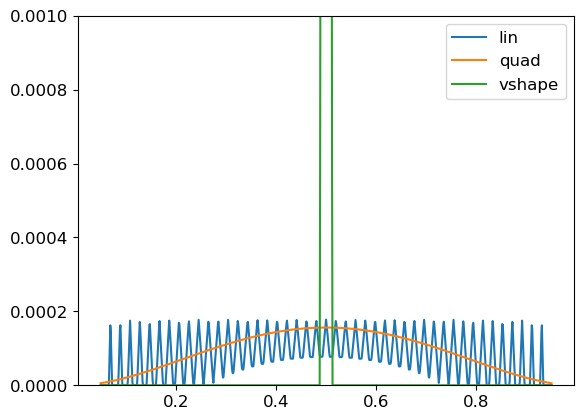}
    \end{minipage}
    \caption{The top plot is the associated convex functions for information structures with homogeneous one coin model for ex-post bounded setting with $B = 1$.  The middle one is the information gain when $\rho = 0.25$, and the bottom one is the information gain when $\rho = 0.025$.
    }\label{fig:coin_post}
\end{figure}

\section[General State Space]{$d$-dimensional State Space}
In this section, we demonstrate how to generalize our result to general categorical state space $\Omega$ with $d$ possible outcome.  For simplicity, we only consider ex-ante bounded setting.  
\subsection{Notations}
Here we list some notations.  Given positive integer $d$ and $n$, let $[n] := \{1, \ldots, n\}$, $\Delta = \Delta_{d} = \Delta(\Omega)\subset \R^d$ is probability simplex over $\Omega = [d]$.  The vertices of the simplex $\Delta_d$ are  $\hat{e}_j$ for $j\in [d]$ which is also the standard basis of $\R^d$.
We call all one vector $\mathbf{1} := (1, \ldots, 1)\in\R^d$ and $c := \frac{1}{d}\mathbf{1}$.

\paragraph{Proper Scoring Rules}
A \emph{scoring rule} for a random variable $W\in \Omega$ is a function $PS:\Omega\times\Delta(\Omega)\to \R$ where $PS(\omega, \hat{x})$ is the score assigned to a prediction $\hat{x}\in \Delta(\Omega)$ when $W = \omega$.  The scoring rule is (strict) proper if for all random variable $W$ on $\Omega$ with distribution $x$, setting $\hat{x} = x$ (uniquely) maximizes the expected score $\E_{W\sim x} PS(W,\hat{x})$.  In other words, if $W$ is distributed according to $x$, then truthfully reporting $x$ can maximizes the expected score.

\begin{theorem}[Savage representation~\cite{mccarthy1956measures,savage1971elicitation,gneiting2007strictly}]
For every (strict) proper scoring rule $PS$, there exists a (strictly) convex function $H:\Delta(\Omega)\to \R$ so that for all $\omega\in \Omega$ and $x\in \Delta(\Omega)$
$$PS(\omega, x) = H(x)+ \partial H(x)\cdot(\mathbf{1}_\omega-x)$$
where $\partial H(x)$ is a sub-gradient of $H$ at $x$ and $\mathbf{1}_\omega\in \Delta(\Omega)$ is the indicator function with $\mathbf{1}_\omega(w') = 1$ if $w'  = \omega$ and $0$ otherwise.

Conversely, for every (strictly) convex function $H: \Delta(\Omega)\to \R$, there exists a (strict) proper scoring rule such that the above condition hold.
\end{theorem}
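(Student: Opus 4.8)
The plan is to prove the two directions separately, taking in the forward direction the convex function $H$ to be the expected score under truthful reporting, and in the converse direction defining the scoring rule directly from a subgradient of $H$.

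\textbf{Forward direction.} Given a proper scoring rule $PS$, define the \emph{expected score function} $H(x) := \E_{W\sim x}[PS(W,x)] = \sum_{\omega} x_\omega PS(\omega, x)$. For each fixed report $\hat x\in\Delta(\Omega)$ the map $x\mapsto G_{\hat x}(x) := \sum_\omega x_\omega PS(\omega,\hat x)$ is linear, and by properness $H(x) = \sup_{\hat x} G_{\hat x}(x)$ with the supremum attained at $\hat x = x$; hence $H$, being a pointwise supremum of linear functions, is convex. If $PS$ is strictly proper then $H$ is strictly convex: for $x\ne y$ and $z=\lambda x+(1-\lambda)y$ with $\lambda\in(0,1)$, write $H(z)=\lambda G_z(x)+(1-\lambda)G_z(y)$ and use $G_z(x)<H(x)$, $G_z(y)<H(y)$ (strict propriety applied at beliefs $x$ and $y$ against the report $z$) to get $H(z)<\lambda H(x)+(1-\lambda)H(y)$. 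For the representation, note that for each $x$ the linear function $G_x$ lies weakly below $H$ and agrees with $H$ at $x$, so $G_x$ is a supporting hyperplane; reading off its gradient shows that $\partial H(x):=(PS(\omega,x))_{\omega\in\Omega}$ is a subgradient of $H$ at $x$. Plugging this choice into $H(x)+\partial H(x)\cdot(\mathbf 1_\omega-x)$ and using $\sum_{\omega'} x_{\omega'} PS(\omega',x)=H(x)$ collapses the right-hand side to $PS(\omega,x)$, giving the claimed form.

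\textbf{Converse direction.} Given a convex $H:\Delta(\Omega)\to\R$, pick a subgradient $\partial H(x)$ at each point (it exists on $\relint(\Delta(\Omega))$ by convexity; see the obstacle below for the boundary) and \emph{define} $PS(\omega,x):=H(x)+\partial H(x)\cdot(\mathbf 1_\omega-x)$. To check properness, compute the expected score of reporting $\hat x$ when $W\sim x$: since $\sum_\omega x_\omega\mathbf 1_\omega = x$, this equals $H(\hat x)+\partial H(\hat x)\cdot(x-\hat x)$. The subgradient inequality $H(x)\ge H(\hat x)+\partial H(\hat x)\cdot(x-\hat x)$ shows this is maximized over $\hat x$ at $\hat x=x$, so $PS$ is proper; when $H$ is strictly convex the inequality is strict for $\hat x\ne x$, so $PS$ is strictly proper.

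\textbf{Main obstacle.} The algebra above is routine; the real care is needed at $bd(\Delta(\Omega))$, where a convex $H$ (for instance negative entropy, which yields the log scoring rule) may fail to have a finite subgradient. The standard remedy — which I would adopt — is to allow $PS$ to take values in $\R\cup\{-\infty\}$ at such points, or equivalently to demand the representation only on $\relint(\Delta(\Omega))$ and extend by limits, and then to verify that these infinite penalties do not break properness (they cannot, since an infinite penalty only worsens a misreport). A second, minor point is gauge-fixing the subgradient: any two subgradients of $H$ at $x$ differ by a multiple of $\mathbf 1$, and since $\mathbf 1\cdot(\mathbf 1_\omega-x)=1-1=0$ on the simplex, the representation is insensitive to this choice, so the correspondence between $PS$ and $H$ is well defined up to an additive constant in $H$.
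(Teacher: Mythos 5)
The paper does not prove this theorem; it is stated as a classical result with citations to McCarthy, Savage, and Gneiting--Raftery, so there is no internal proof to compare against. Your argument is correct and is precisely the standard one from those references: the forward direction realizes $H$ as the pointwise supremum of the linear maps $x\mapsto\E_{W\sim x}PS(W,\hat x)$ and reads off $(PS(\omega,x))_\omega$ as a supporting-hyperplane subgradient, and the converse is the subgradient inequality. You also correctly flag the two genuine subtleties (infinite subgradients at $bd(\Delta(\Omega))$ and the $\mathbf{1}$-direction gauge freedom), which is consistent with the paper's later observation that $H_{\ln}$ is smooth only on $\relint(\Delta_d)$.
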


We list some common proper scoring rules with associated convex functions which are scaled to be in $\mathcal{H}$:
1) quadratic scoring rule $Q(x) = \frac{d}{d-1}\sum_{k = 1}^d (x_k^2-1/d)^2$, 2) spherical scoring rule $H_s(x) = \frac{\sqrt{d}}{\sqrt{d}-1}\sqrt{\sum x_k^2}-\frac{1}{\sqrt{d}-1}$, 3) log scoring rule $H_{\ln}(x) = \frac{1}{\ln d}\sum_k x_k\ln x_k+1$.

When the state space $\Omega$ is binary, the associated convex function is one-dimensional.  For every proper scoring rule $PS$, there exists a convex function $H:[0,1]\to\R$ so that for all $x\in [0,1]$ and binary event $\omega\in \{0,1\}$
\begin{equation}\label{eq:ps_bin}
    PS(\omega, x) = \begin{cases}
H(x)+ \partial H(x)\cdot (1-x)&\text{ if } \omega = 1,\\
H(x)-\partial H(x)\cdot x&\text{ if } \omega = 0.\end{cases}
\end{equation}

\subsection{Singleton Information Structure}
As a warm-up, let's consider the principal exactly knows the agent's information structure so that $\X = \{X\}$ is a singleton.  We show the optimal $H$ can be an upside down pyramid. (\cref{fig:multi_v_shape})  

\begin{theorem}[name = singleton, label = thm:multi_singleton]
If $\X = \{X\}$ is singleton and the state space $\Omega = [d]$, there exists an optimal scoring rule associated with an upside down pyramid $H^\star$ such that the epigraph of $H^\star$ is the convex hull with vertices $(\E X,0)$, and $(\hat{e}_k, 1)$ for all $k\in [d]$.
\end{theorem}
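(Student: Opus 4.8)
The plan is to exhibit the pyramid explicitly, verify it is an admissible (bounded, convex) function, and then prove a matching upper bound on $\J_H(X)$ that holds for \emph{every} $H\in\mathcal{H}$. Write $p:=\E X$. After restricting to the coordinates in the support of $X$ we may assume $p\in\relint(\Delta_d)$ (if $p$ is a vertex of $\Delta_d$ then $X$ is a constant information structure and every $H$ is trivially optimal). I would identify the lower boundary of $\mathrm{conv}\{(p,0),(\hat e_1,1),\dots,(\hat e_d,1)\}$ with the graph of
$$H^\star(x) \;=\; 1-\min_{k\in[d]}\frac{x_k}{p_k},$$
since any convex representation $\lambda_0(p,0)+\sum_k\lambda_k(\hat e_k,1)$ of a point $(x,t)$ forces $t=1-\lambda_0$ and requires $\lambda_0\le x_k/p_k$ for every $k$, so the smallest achievable height at $x$ is $1-\min_k x_k/p_k$. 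As a maximum of affine functions $H^\star$ is convex and piecewise linear, and since $0\le\min_k x_k/p_k\le\sum_k p_k\,(x_k/p_k)=1$ we get $H^\star\in\mathcal{H}$; by \cref{thm:ps} it corresponds to a bounded proper scoring rule. Finally $H^\star(p)=0$, so $\J_{H^\star}(X)=\E[H^\star(X)]=1-\E\big[\min_k X_k/p_k\big]$.

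For the upper bound I would use the following geometric decomposition of each realization $x$ of $X$: set $\lambda_0(x):=\min_k x_k/p_k\in[0,1]$ and $y(x):=\big(x-\lambda_0(x)\,p\big)/\big(1-\lambda_0(x)\big)$ (and $y(x):=x$ when $\lambda_0(x)=1$, i.e.\ $x=p$). The choice of $\lambda_0(x)$ guarantees $y(x)\in\Delta_d$ — in fact $y(x)\in bd(\Delta_d)$, as the minimizing coordinate vanishes — and $x=\lambda_0(x)\,p+(1-\lambda_0(x))\,y(x)$. Convexity of $H$ together with $H(y(x))\le 1$ then yields the pointwise bound $H(x)\le\lambda_0(x)H(p)+(1-\lambda_0(x))$. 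Taking expectations over $X$ and subtracting $H(p)$ gives
$$\J_H(X)\;\le\;\big(1-\E[\lambda_0(X)]\big)\big(1-H(p)\big)\;\le\;1-\E[\lambda_0(X)]\;=\;\J_{H^\star}(X),$$
where the final inequality uses $H(p)\ge 0$ (so $1-H(p)\le 1$) and $\E[\lambda_0(X)]\le 1$ (so the factor $1-\E[\lambda_0(X)]$ is nonnegative). Hence $H^\star$ attains $\max_{H\in\mathcal{H}}\J_H(X)$, which also pins down $\Opt(\{X\})=1-\E[\min_k X_k/p_k]$.

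The creative step — and really the only obstacle — is spotting the right candidate and decomposition. One has to realize that for a posterior realization $x$ the quantity that governs the Jensen gap is how far one can slide from $x$ back toward the prior $p$ while staying in the simplex, which is exactly $\min_k x_k/p_k$, and that the ex-ante boundedness constraint $H\le 1$ is precisely what caps the ``boundary'' contribution. Consequently the function that equals $0$ at $p$ and saturates that cap on $bd(\Delta_d)$ is simultaneously pointwise-maximal among admissible convex $H$ \emph{and} the maximizer of $\J_{\cdot}(X)$. The remaining bookkeeping — the degenerate case $p\in bd(\Delta_d)$ and the $\lambda_0(x)=1$ boundary case in the decomposition — is routine, handled by passing to the support of $X$ and treating $x=p$ separately.
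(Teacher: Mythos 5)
Your proof is correct, and it packages the pointwise inequality differently from the paper. The paper introduces an intermediate piecewise-linear function $\tilde{H}$ with apex at $(\E X,\,H(\E X))$ and vertices $(\hat e_k,1)$, argues via epigraph containment that $H\le\tilde H$, and then relates $\tilde H$ to $H^\star$ by the affine rescaling $\tilde H = (1-h_0)H^\star + h_0$, picking up a factor $1-h_0\le 1$. You instead write down the pyramid in closed form as $H^\star(x)=1-\min_k x_k/p_k$, and for each realization $x$ decompose it as the explicit convex combination $x=\lambda_0(x)\,p+(1-\lambda_0(x))\,y(x)$ with $\lambda_0(x)=\min_k x_k/p_k$ and $y(x)\in bd(\Delta_d)$; convexity of $H$ plus the cap $H\le 1$ then gives $H(x)-H(p)\le(1-\lambda_0(x))(1-H(p))$ directly, and taking expectations finishes. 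Both proofs rest on the same two constraints (convexity and $0\le H\le 1$), but yours skips the intermediate function and the epigraph-containment step, and has the side benefit of yielding the closed-form optimal value $\Opt(\{X\})=1-\E[\min_k X_k/p_k]$. The paper's version, conversely, makes the degenerate cases (e.g.\ $h_0=1$) slightly more transparent, whereas in your version the edge cases $p\in bd(\Delta_d)$ and $\lambda_0(x)=1$ require the side remarks you gave; those remarks are adequate since $p_j=0$ forces $X_j=0$ a.s.\ and so the minimum can be taken over $\{k:p_k>0\}$ without affecting anything.
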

Since the information gain $\Jen[X]{H}$ is an integration of $H(x)-H(\E X)$ over all $x\in \Delta_d$ for any $H$, we can prove \cref{thm:singleton} by a pointwise inequality, $H(x)-H(\E X)\le H^\star(x)-H^\star(\E X)$ for all $x$.  See appendix for more details. 

When the state space is binary $\Omega = \{0,1\}$, by \cref{eq:ps_bin}, the upside down pyramid $H$ is \emph{v-shaped}, and \cref{thm:singleton} yields the following corollary.
\begin{corollary}\label{prop:v-shape}
If the state space $\Omega = \{0,1\}$ and $\X = \{X\}$, there exists an optimal scoring rule associated with a v-shape $H$ such that
$$H(x) = \begin{cases}
\frac{-1}{\E X}(x-\E X) &\text{ if } x<\E X\\
\frac{1}{1-\E X}(x-\E X)&\text{ if } x\ge \E X\end{cases}$$
\end{corollary}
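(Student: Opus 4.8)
The plan is to obtain Corollary \ref{prop:v-shape} as the $d=2$ specialization of Theorem \ref{thm:singleton}. We may assume $\E X \in (0,1)$: if $\E X \in \{0,1\}$ then $X = \E X$ almost surely, so $X$ is a constant information structure, $\J_H(X) = 0$ for every $H \in \mathcal{H}$, and every $H$ — including the (then degenerate) one in the statement — is optimal.

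First I would fix the identification of $\Delta_2 = \Delta(\{0,1\})$ with the unit interval: write a prediction $\hat x$ by its probability of the state $1$, so that $\hat x \leftrightarrow x \in [0,1]$, the two vertices of $\Delta_2$ correspond to $x = 0$ and $x = 1$, and a convex function on $\Delta_2$ becomes a convex function on $[0,1]$; under this identification the Savage representation is exactly \eqref{eq:ps_bin}. Applying Theorem \ref{thm:singleton} with $d = 2$, there is an optimal scoring rule whose associated convex function $H^\star$ has epigraph equal to the convex hull of the three planar points $(0,1)$, $(\E X, 0)$, $(1,1)$, where the first coordinate is the prediction $x$ and the second is the function value.

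Next I would read off this convex hull explicitly. Its lower boundary — the graph of $H^\star$ — is piecewise linear with a single kink at $x = \E X$: on $[0,\E X]$ it is the segment from $(0,1)$ to $(\E X,0)$, of slope $-1/\E X$, giving $H^\star(x) = -\tfrac{1}{\E X}(x - \E X)$; on $[\E X, 1]$ it is the segment from $(\E X,0)$ to $(1,1)$, of slope $1/(1-\E X)$, giving $H^\star(x) = \tfrac{1}{1-\E X}(x - \E X)$. This is precisely the function displayed in the corollary. It lies in $\mathcal{H}$ (its values are in $[0,1]$) and admits a subgradient at every point — at the kink, $\partial H^\star(\E X) = [-1/\E X,\; 1/(1-\E X)]$ — so the associated proper scoring rule is well defined via \eqref{eq:ps_bin}, and its optimality is inherited verbatim from Theorem \ref{thm:singleton}.

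There is essentially no hard step here. The only points requiring a little care are the edge cases $\E X \in \{0,1\}$ handled above, and verifying that the one-dimensional parametrization faithfully turns the $d$-dimensional upside-down pyramid of Theorem \ref{thm:singleton} into the stated V-shape, which is exactly the elementary convex-hull computation just described.
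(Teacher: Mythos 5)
Your proposal is correct and takes essentially the same route as the paper: it derives the corollary as the $d=2$ case of Theorem~\ref{thm:singleton}, identifies $\Delta_2$ with $[0,1]$, and reads off the v-shape as the lower boundary of the convex hull of $(0,1)$, $(\E X,0)$, $(1,1)$. The explicit handling of the degenerate case $\E X\in\{0,1\}$ is a minor addition the paper leaves implicit.
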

These results suggest the principal should choose $H$ that is ``curved'' at the prior in order to incentivize the agent to derive the signal and move away from the prior.  This intuition is useful for the later sections.

\begin{figure}
    \centering
    \begin{tabular}{cc}
\includegraphics[width=0.4\textwidth]{images/v_shape.png}& \includegraphics[width=0.5\textwidth]{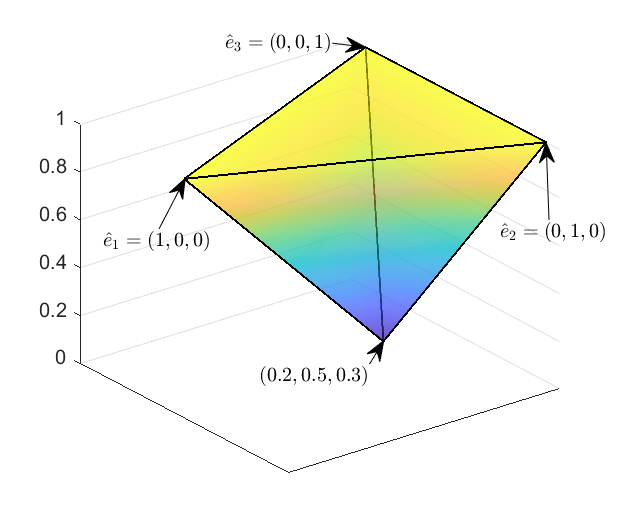}
\end{tabular}
    \caption{The top panel shows several examples of the optimal $H$ in \cref{prop:v-shape} which have $\E X = 0.2, 0.5$, and $0.7$ respectively.  The bottom panel shows the optimal $H^\star$ in \cref{thm:singleton} when $d = 3$, $\X = \{X\}$, and $\Pr[X = 1] = 0.2,\Pr[X = 2] = 0.5, \Pr[X = 3] = 0.3$.}
    \label{fig:multi_v_shape}
\end{figure}
\begin{proof}[Proof of \cref{thm:multi_singleton}]
Let $H^\star$ be the upside down pyramid in \cref{thm:singleton}, and $H\in \mathcal{H}$ be an arbitrary bounded convex function.  Since $\X = \{X\}$ is singleton, it is sufficient to prove $\Jen[X]{H}\le \Jen[X]{H^\star}$.  

Let $h_0 := H(\E X)$.  If $h_0 = 1$, $H$ is a constant function and $0 = \Jen[X]{H}\le \Jen[X]{H^\star}$ by the Jensen's inequality.  Now we consider $h_0<1$.  Let $\tilde{H}$ be a convex piecewise linear function whose epigraph has vertices at $(\E X, h_0)$, and $(\hat{e}_k, 1)$ for all $k\in [d]$.   First, because $H\in \mathcal{H}$, $H(\hat{e}_k)\le \tilde{H}(\hat{e}_k) = 1$ for all $k\in [d]$, and $H(\E X) = \tilde{H}(\E X) = h_0$.  Then, the epigraph of $\tilde{H}$ is contained in the epigraph of $H$, so
\begin{equation}\label{eq:singleton1}
     H(x)-H(\E X)\le \tilde{H}(x)-\tilde{H}(\E X) \text{ for all }x\in \Delta_d.
\end{equation}
Second, because the epigraphs of $H^\star$ and $\tilde{H}$ are both upside down pyramids, and the vertices are aligned, we can convert $\tilde{H}$ to $H^\star$ through an affine transformation: $H(x) = \frac{\tilde{H}(x)-h_0}{1-h_0}$ for all $x\in \Delta_d$.  Therefore, for all $x\in \Delta_d$,
\begin{equation}\label{eq:singleton2}
    \tilde{H}(x)-\tilde{H}(\E X) = (1-h_0) \left(H^\star(x)-H^\star(\E X)\right)\le H^\star(x)-H^\star(\E X),
\end{equation}
because $0\le h_0<1$, and both sides are non-negative.  
Combining \cref{eq:singleton1,eq:singleton2}, we have
$\Jen[X]{H} = \E_X\left[H(X)-H(\E X)\right]\le \E_X\left[H^\star(X)-H^\star(\E X)\right] = \Jen[X]{H^\star}$, and complete the proof.
\end{proof}
\subsection{Finite Information Structures}\label{sec:multi_finite}
In this section, we give a polynomial time algorithm that computes an optimal scoring rule when the collection of information structures is finite as defined below.

\begin{definition}\label{def:finte}
We call a collection of information structures $\X$ \emph{finite} if $|\X|$ is finite and all $X\in \X$ has a finite support $|\supp(X)|<\infty$.

When $\X$ is finite, let $\overline{\supp}(X) := \supp(X)\cup \{\E X\}\subset \Delta(\Omega)$ for all $X\in \X$, and $\overline{\supp}(\X) := \cup_{X\in \X} \overline{\supp}(X)$.
\end{definition}
The notion of finite collection of information structures is natural when there is a finite number of heterogeneous agents with finite set of information.  

\begin{theorem}\label{thm:multi_finite}
If the state space $\Omega = [d]$, and $\X$ is finite with $|\overline{\supp}(\X)| = m$, there exists an algorithm that computes an optimal ex-ante bounded proper scoring rule and the running time is polynomial in $d$ and $m$.
\end{theorem}

The main idea is that when $\X$ is finite $\obj[X]{H}$ in \cref{eq:obj} only depends on the evaluations of $H$ in $\overline{\supp}(\X)$.  Thus, instead of searching for all possible bounded scoring rules, we can reduce the dimension of \cref{prob:opt} and use a linear programming whose variables contain the evaluations of $H$ in $\overline{\supp}(\X)$ and add linear constraints to ensure those evaluation can be extended to a convex function.
This observation allows us the solve the problem in weakly polynomial time which is polynomial in $d$ and $m$ but may not be polynomial in the representation size.  We present the formal proof in the appendix.

Note that if $\X$ contains a constant information structure $X$, the resulting linear programming~\eqref{eq:lp_bin} will output an arbitrary piecewise linear convex function, because the objective value is always zero (\cref{fn:multi_equality}).

\begin{proof}[Proof of Theorem~\ref{thm:multi_finite}]
The idea is to construct a linear programming whose variables contain the evaluations of $H$ in $\overline{\supp}(\X)$.  To formulate this, we introduce some notations.
Given $|\X| = n$, we set $\X = \{X_i: i = 1, \ldots, n\}$.  For each $i\in [n]$, let the support of $X_i$ be $\supp(X_i) = \{x_{i,j}:j\in [m_i]\}$ with size $|\supp(X_i)| = m_i$.  Additionally, let the expectation be $x_{i,0} = \E X_i$, and $\Pr(X_i = x_{i,j}) = p_{i,j}$.  Hence $\overline{\supp}(X_i) = \{x_{i,j}: j = 0, \ldots, m_i\}$ and $\overline{\supp}(\X) = \{x_{i,j}:i\in [n], j = 0, \ldots, m_i\}$.  We further use $\mathcal{A} = \{(i,j):i\in [n], j = 0, \ldots, m_i\}$ to denote the set of indices.  Finally, we set the vertices of the probability simplex $\Delta_d$ be $x_{k} = \hat{e}_k$ for all $k\in [d]$, and $\bar{\mathcal{A}} :=\mathcal{A}\cup [d]$.  To simplify the notations, we assume $\overline{\supp}(\X)$ does not contain any vertex of $\Delta_d$ $\hat{e}_k$ for $k = 1, \ldots, d$ and for all distinct $\alpha, \alpha'$ in $\mathcal{A}$, $x_{\alpha}\neq x_{\alpha'}$.\footnote{Otherwise, we just need to add some equality constraints.  For instance, if $x_\alpha = x_{\alpha'}$, we need to set $h_\alpha = h_{\alpha'}$ as a  constraint.\label{fn:multi_equality}}

Note that the objective value only depends on a finite number of values.  Specifically, given $H(x_{\alpha}) = h_\alpha$ for any $\alpha\in \mathcal{A}$, the objective, \cref{eq:obj}, is 
\begin{equation}\label{eq:multi_bin2}
    \obj[\X]{H} = \min_{i\in [n]} \sum_{k = 1}^{m_i}p_{i,j}h_{i,j}-h_{i,0}.
\end{equation}
Thus, we can first decide $h_\alpha$ to maximize \cref{eq:multi_bin2}, and ``connect'' those points $(x_\alpha, h_\alpha)$ to construct a piece-wise linear function.   To ensure the resulting function is convex, we further require there exists a supporting hyperplane for each $(x_\alpha, h_\alpha)$--- for each $\alpha$ there exists $g_{\alpha}\in \R^d$ such that $h_{\alpha'}\ge h_{\alpha}+ g_{\alpha}^\top(x_{\alpha'}-x_{\alpha})$ for all $\alpha'\neq \alpha$.

In summary, we set the convex function to be 
\begin{equation}\label{eq:multi_opt_bin}
    H(x) = \max\left\{\max_{\alpha\in \bar{\mathcal{A}}} h_{\alpha}+g_{\alpha}(x-x_\alpha), \min_\alpha h_\alpha\right\},
\end{equation}
and the collection of $h_{\alpha}$ and $g_{\alpha}$ is a solution of the following linear programming,
\begin{equation}\label{eq:multi_lp_bin}
\begin{aligned}
& \max && \min_{i\in [n]} \sum_{k = 1}^{m_i}p_{i,j}h_{i,j}-h_{i,0},\\
& \text{subject to} && h_{\alpha}\in [0,1], &\forall \alpha\in \bar{\mathcal{A}},\\
&&& h_{\alpha'}\ge h_{\alpha}+g_{\alpha}^\top(x_{\alpha'}-x_\alpha),&\forall \alpha, \alpha' \in  \bar{\mathcal{A}}.
\end{aligned}
\end{equation}
The above linear programming has $(1+d)|\bar{\mathcal{A}}| = O(d(m+d))$ variables and $|\bar{\mathcal{A}}|+|\bar{\mathcal{A}}|^2 = O((m+d)^2)$ constraints, so we can solve it in polynomial time with respect to $m$ and $d$.  

Now we need to show $H$ is 1) convex, 2) bounded in $[0,1]$, and  3) optimal.  It is easy to see for all $\alpha\in \bar{\mathcal{A}}$,
\begin{equation}\label{eq:multi_bin1}
    H(x_{\alpha}) = h_{\alpha}, 
\end{equation}
because 
\begin{align*}
    &H(x_{\alpha})\\
    =& \max\left\{h_{\alpha}, \max_{\alpha'\neq \alpha} h_{\alpha'}+g_{\alpha'}(x_{\alpha}-x_{\alpha'}), \min_{\alpha'} h_{\alpha'}\right\}\tag{$h_{\alpha}+g_{\alpha}(x_{\alpha}-x_{\alpha}) = h_{\alpha}$}\\
    =& \max\left\{h_{\alpha}, \max_{\alpha'\neq \alpha} h_{\alpha'}+g_{\alpha'}(x_\alpha-x_{\alpha'})\right\}\tag{$h_{\alpha}\ge \min_{\alpha'} h_{\alpha'}$}\\
    =& h_{\alpha}\tag{by the constraints in \cref{eq:multi_lp_bin}}
\end{align*}
First because $H$ is the maximum of a collection of linear functions, $H$ is convex.   Second, for the lower bound, by the constraints in \cref{eq:multi_lp_bin} $\min h_\alpha\ge0$ so $0\le \min h_\alpha\le H(x)$ due to \cref{eq:multi_opt_bin}.  For the upper bound, because $H$ is convex, for all $x\in \Delta_d$, $H(x)\le \max_k \{H(\hat{e}_k)\} = \max_k \{h_{k}\}\le 1$ by  \cref{eq:multi_bin1,eq:multi_lp_bin}.  
Finally, for any bounded convex function $\tilde{H}\in \mathcal{H}$, we set $\tilde{h}_\alpha = \tilde{H}(x_\alpha)$ for $\alpha\in \bar{\mathcal{A}}$.  At each $x_\alpha$ we can find a vector $\tilde{g}_\alpha$ such that 
$\tilde{H}(x)\ge \tilde{H}(x_\alpha)+\tilde{g}_\alpha^\top(x-x_\alpha)$
for all $x\in \Delta_d$.\footnote{Specifically, we can construct $\tilde{g}_\alpha$ by finding a support hyperplane to the epigraph of $\tilde{H}$ at $(x_\alpha, h_{\alpha})$, and the vector $\tilde{g}_\alpha$ is called subgradient.}  Since $\tilde{H}$ is convex and in $\mathcal{H}$, the collection of $\tilde{h}_\alpha$ and $\tilde{g}_\alpha$ is a feasible solution to \cref{eq:multi_lp_bin}, and $\obj[\X]{\tilde{H}}\le \obj[\X]{H}$.
\end{proof}
\end{document}